\documentclass[11pt]{article}
\usepackage[margin=1in]{geometry}
\usepackage{times}
\usepackage[
backend=biber,
style=alphabetic,
sorting=nyt,
maxnames=7,
maxalphanames=7,
backref=true,
url=false
]{biblatex}
\usepackage{tikz}
\usepackage{scalerel}
\usepackage{pict2e}
\usepackage{setspace}
\usepackage{fancyhdr}
\usepackage{tkz-euclide}
\usetikzlibrary{calc}
\usetikzlibrary{patterns,arrows.meta}
\usetikzlibrary{shadows}
\usetikzlibrary{external}
 
\usepackage{pgfplots}
\pgfplotsset{compat=newest}
\usepgfplotslibrary{statistics}
\usepgfplotslibrary{fillbetween}

\usepackage{xcolor}

\usepackage{nicefrac}
\usepackage[colorlinks=true,citecolor=blue,linkcolor=blue]{hyperref}

\hypersetup{
    colorlinks,
    linkcolor={red!50!black},
    citecolor={blue!50!black},
    urlcolor={blue!80!black}
}

\usepackage[]{amsmath,amssymb,amsfonts,latexsym,amsthm,enumerate,fullpage,xcolor,bbm,mathtools}
\usepackage{wrapfig}
\usepackage{float}
\usepackage[ruled]{algorithm2e}

\usepackage{enumitem}
\usepackage[nameinlink, noabbrev, capitalize]{cleveref}
\usepackage{comment}
\usepackage{nicefrac}
\crefname{prop}{Proposition}{Propositions}
\crefname{ineq}{inequality}{inequalities}
\creflabelformat{ineq}{#2(#1)#3}
\usepackage{todonotes}
\usepackage{multicol}
\allowdisplaybreaks[4]

\newtheorem{globtheorem}{Theorem}

\newtheorem{counter}{Counter}[section]
\newtheorem{theorem}[counter]{Theorem}

\newtheorem{lemma}[counter]{Lemma}

\newtheorem{claim}[counter]{Claim}
\newtheorem{fact}[counter]{Fact}

\newtheorem{corollary}[counter]{Corollary}

\newtheorem{definition}[counter]{Definition}

\newcommand{\N}{\mathbb{N}}

\newcommand{\U}{\mathbf{U}}

\newcommand{\X}{\mathbf{X}}

\newcommand{\Y}{\mathbf{Y}}

\newcommand{\A}{\mathbf{A}}

\renewcommand{\multicitedelim}{\addsemicolon\space}

\newcommand\numberthis{\addtocounter{equation}{1}\tag{\theequation}}

\newcommand{\zo}{\{0,1\}}

\newcommand{\abs}[1]{\left\lvert #1 \right\rvert}
\newcommand{\norm}[1]{\left\Vert #1 \right\Vert}

\usepackage{xspace}

\newcommand{\eps}{\varepsilon}

\DeclareMathOperator{\I}{\mathbf{I}}
\DeclareMathOperator{\J}{\mathbf{J}}
\let\P\relax
\DeclareMathOperator{\P}{\mathbf{P}}

\DeclareMathOperator*{\E}{\mathbb{E}}

\usepackage{subfiles}
\newcommand{\dobib}{
    \printbibliography
}

\begin{document}
\renewcommand{\dobib}{}
\renewcommand*{\multicitedelim}{\addcomma\space}
 
\title{A Resolution of Friedgut's Conjecture on Influential Coalitions}

\author{Eshan Chattopadhyay\thanks{This work was supported in part by NSF Award CCF-2514586.}\\ Cornell University\\ 
\texttt{eshan@cs.cornell.edu}  
\and Mohit Gurumukhani\footnotemark[1] \\ Cornell University\\ \texttt{mgurumuk@cs.cornell.edu} 
}

\date{}

\maketitle

\begin{abstract}
We prove that, for every constant $\varepsilon>0$ and every function $f:\Sigma^n\to\{0, 1\}$, there is a coalition of $O(n/\sqrt{\log n})$ coordinates and a target output $b\in\{0, 1\}$ such that, after the remaining coordinates are sampled uniformly and independently, the coalition can choose its values to make the output equal to $b$ with probability at least $1-\varepsilon$. The bound is independent of the alphabet size and also holds for monotone Boolean functions on $[0,1]^n$, resolving a conjecture of Friedgut (Combinatorics, Probability and Computing, 2004). Unlike the Boolean cube setting, where Kahn, Kalai, and Linial (FOCS, 1988) give a coalition bound of $O(n/\log n)$, no sublinear bound independent of the alphabet size was previously known.

In collective coin flipping, our result gives the first sublinear bound on the number of bad players needed to force a fixed output with probability at least $1-\varepsilon$ in any one-round protocol with independent uniform messages, regardless of the message length.

A key ingredient in our proof is an encoding that lets us relate the influence of a function on a product space to the $p$-biased influence of the encoded function. We then rely on a structure theorem of Hatami (Annals of Mathematics, 2012) for functions with small $p$-biased influence to bias the encoded function.
\end{abstract}

\section{Introduction}
The Kahn--Kalai--Linial (KKL) theorem~\cite{KKL88influence}
is a fundamental result in the analysis of Boolean functions.
It states that every balanced Boolean function
$f:\zo^n\to\zo$ has a coordinate with influence at least
$\Omega(\log n/n)$, where the influence of a coordinate is the
probability that changing its value changes the output.
An important consequence is that a small coalition of coordinates
can almost determine the output. More precisely, for every constant $\eps>0$ and any target output $b\in\zo$, there is
a set of $O(n/\log n)$ coordinates such that, after the remaining
coordinates are sampled uniformly and independently, the
coalition can choose its values to make the output of $f$ equal
to $b$ with probability at least $1-\eps$.

The same question can be asked for functions
$f:\Sigma^n\to\zo$ over an arbitrary finite alphabet $\Sigma$.
Does a sublinear coalition still suffice, with a bound
independent of $|\Sigma|$?
In this setting, the influence of a coordinate is the
probability that, after the remaining coordinates are sampled
uniformly and independently, its value can be chosen to make
the output either $0$ or $1$.
Bourgain, Kahn, Kalai, Katznelson, and Linial
(BKKKL)~\cite{BKKKL92product} extended the KKL theorem to
these product spaces. Their theorem states that every balanced
function has a coordinate with influence at least
$\Omega(\log n/n)$, with no dependence on the alphabet size.
More generally, their theorem also applies when each coordinate
is uniform over $[0,1]$ and
$f:[0,1]^n\to\zo$.\footnote{In the continuous setting, we consider Boolean functions
on $[0,1]^n$ that are either monotone or Borel measurable,
with probabilities taken under Lebesgue measure.}

However, the consequence for coalitions no longer holds:
a small coalition need not be able to force an arbitrary
target output with probability close to one.
To see this, consider a finite-alphabet version of the example
discussed by Friedgut~\cite{friedgut04influences}.
Set $\Sigma=\{0,1,\ldots,n-1\}$ and define
$f:\Sigma^n\to\zo$ by
\[
  f(x)=1 \quad\Longleftrightarrow\quad
  x_i\ne 0 \text{ for every } i\in[n].
\]
The probability that $f(x)=1$ is $(1-1/n)^n$, which tends
to $1/e$. Every coordinate has influence $(1-1/n)^{n-1}$:
it can determine the output exactly when all the other
coordinates are nonzero.
Nevertheless, a coalition of $s=o(n)$ coordinates can make
the output equal to $1$ with probability at most
$(1-1/n)^{n-s}$, which still tends to $1/e$.
This is because every coordinate outside the coalition must
be nonzero, regardless of how the coalition chooses its values.
In contrast, a single coordinate can force the output to be
$0$ by setting its value to $0$.

This example illustrates why repeatedly applying BKKKL
does not give the same coalition guarantee.
For this discussion, it suffices to consider monotone
functions (see \cref{claim:monotonization}).
For a monotone Boolean function on $\zo^n$, fixing a
coordinate to $1$ gives a function on the remaining
coordinates whose expectation is larger by half the
influence of that coordinate. Fixing it to $0$ instead
gives a function whose expectation is smaller by the
same amount. We can then apply KKL to the restricted
function and repeat, making progress towards the same
target output at each step.
For a monotone Boolean function over a larger ordered
alphabet, fixing a coordinate to its largest value
increases the expectation of the restricted function,
while fixing it to its smallest value decreases it.
The sum of the increase and the decrease still equals
the influence, but one may be much smaller than the other.
In the example above, fixing a coordinate to a nonzero
value gives a function whose expectation exceeds the
original expectation by only $(1-1/n)^{n-1}/n$, despite
the coordinate having constant influence.
Thus, BKKKL may give an influential coordinate for which
no restriction makes substantial progress towards
our chosen target. Choosing whichever target gives the
larger change in expectation does not resolve the issue:
after restricting a coordinate, the next step may move
the expectation in the opposite direction.

The example leaves open whether a small coalition can always
force at least one of the two outputs.
Friedgut~\cite{friedgut04influences} conjectured that this
is indeed the case, formulating the question for Boolean
functions on $[0,1]^n$.
In the finite-alphabet setting, the question asks whether,
for every constant $\eps>0$ and every function
$f:\Sigma^n\to\zo$, there is a set of $o(n)$ coordinates,
with a bound independent of $|\Sigma|$, and a target output
$b\in\zo$ such that, after the remaining coordinates are
sampled uniformly and independently, the coalition can
choose its values to make the output equal to $b$ with
probability at least $1-\eps$. Towards this, Filmus, Hambardzumyan, Hatami, Hatami, and
Zuckerman~\cite{FHHHZ19} showed that, under any product
distribution on $\zo^n$, a coalition of
$O(n\log\log n/\log n)$ coordinates can force some fixed
output with probability at least $1-\eps$, for any
constant $\eps>0$. This allows arbitrary coordinate biases,
but encoding larger alphabets as bits introduces a
dependence on the message length.

We resolve Friedgut's conjecture, obtaining a coalition of
size $O(n/\sqrt{\log n})$ in both the finite-alphabet and
continuous settings.  

\begin{globtheorem}[Informal version of \cref{thm: corrupt finite alphabet,thm:main_influence_uniform_measure}]\label{informal_thm:intro_main}
For every constant $\eps>0$, every finite alphabet $\Sigma$,
and every function $f:\Sigma^n\to\zo$, there is a coalition
of $O(n/\sqrt{\log n})$ coordinates and a target output
$b\in\zo$ such that, after the remaining coordinates are
sampled uniformly and independently, the coalition can
choose its values to make the output equal to $b$ with
probability at least $1-\eps$.
The implicit constant depends only on $\eps$.
The same guarantee holds for monotone functions as well as Borel measurable Boolean functions on $[0,1]^n$.
\end{globtheorem}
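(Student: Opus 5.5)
We first reduce to the monotone case via \cref{claim:monotonization}. Identify $\Sigma$ with $\{0,\dots,q-1\}$ and shift each coordinate, so that $f$ is nondecreasing in each coordinate. Forcing $1$ then means setting coalition coordinates to $q-1$, and forcing $0$ means setting them to $0$. The continuous case should follow from the finite one, either by discretizing $[0,1]$ into $q$ equal intervals with $q\to\infty$ (monotone functions are determined up to small error), or by approximating Borel functions via monotonization. So we may concentrate on monotone $f:\Sigma^n\to\zo$ with $\mu=\Pr[f=1]\in(\eps,1-\eps)$; otherwise the empty coalition suffices.

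The key step is an encoding into a $p$-biased cube. Write each symbol $x_i\in\Sigma$ through thresholds: $x_i=\#\{j: y_{i,j}=1\}$ where $y_{i,j}$ are monotone bits. A cleaner choice is a single biased bit per coordinate. Define $g:\zo^n\to\zo$ by letting $y_i=1$ mean ``$x_i$ is at its maximum'' and $y_i=0$ mean ``$x_i$ is resampled uniformly''. Then $g_p(y)=\Pr_x[f=1\mid y]$, which is a fractional function on the $p$-biased cube. Relating influences this way, the influence of coordinate $i$ in the product space should control the $p$-biased influence of $g$, up to a factor involving $p$, as $p$ ranges over $[0,1]$. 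Concretely, $\frac{d}{dp}\E_p[g]=\sum_i \mathrm{Inf}^{(p)}_i(g)$ by Russo--Margulis. Setting a coalition $S$ to maximal values corresponds to $y_S=1$.

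The argument then splits into two cases, using Hatami's structure theorem. Increase $p$ from $0$ to $p^*\approx C/\sqrt{\log n}$. If somewhere along the way $\E_p[g]$ crosses $1-\eps$, a random set of about $pn=O(n/\sqrt{\log n})$ coordinates set to maximal values forces $1$ with probability $1-\eps$, and averaging fixes a deterministic coalition. Otherwise, by Russo--Margulis, total $p$-biased influence is at most $O(1/p^*)=O(\sqrt{\log n})$ for most $p\le p^*$. Hatami's theorem then says $g$ at that $p$ is $\delta$-close to a function depending on $O_\delta(1)$ coordinates $J$, more precisely is $\delta$-approximated by a pseudo-junta. Here the implicit constant is $\exp(O(I))$, so we take $I\approx\sqrt{\log n}$ to keep the junta size polynomially small, $n^{o(1)}$. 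Setting the junta coordinates appropriately, either to $0$ values or to maximum, then determines $g$ up to $\delta$. Setting $x_J=0$ on a small $J$ pushes toward $0$, and a mixed setting pushes toward $1$. Whichever junta output is more likely gives a coalition of size $|J|+pn$ forcing $b$ with probability $1-\eps$.

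The main obstacle is this second step. Hatami's theorem gives approximation by a pseudo-junta rather than a true junta, with dependence on the influence bound. We need to translate ``$g$ is close to a function of few coordinates'' back into a coalition in $\Sigma^n$ forcing an output with probability $1-\eps$, not merely shifting the expectation. We would handle this by iterating: restrict the junta, observe that the restricted function has expectation bounded away from $\mu$ by $\Omega(\eps)$, and repeat $O(1/\eps)$ times. The balance of $p^*$ against the $\exp(\mathrm{poly}(I))$ junta size is what should produce the $\sqrt{\log n}$ rather than $\log n$.
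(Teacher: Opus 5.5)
\textbf{There is a genuine gap, in your second case.} Your first case is fine. If, along the path where each coordinate is set to its maximum with probability $p$ and is otherwise uniform, $\Pr[f=1]$ reaches $1-\eps$ by $p^*\approx C/\sqrt{\log n}$, then averaging over the random set gives a coalition of size $O(n/\sqrt{\log n})$ forcing $1$. This is a reasonable alternative to the paper's greedy step.

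In the second case you apply Hatami's theorem to $g_p(y)=\Pr_x[f=1\mid y]$. This function is real-valued, whereas \cref{thm: hatami pseudo junta} concerns monotone Boolean functions under $\mu_p$. More fundamentally, $g_p$ discards exactly the information you need. The value $y_i=0$ means ``$x_i$ is uniform'', not ``$x_i$ is minimal'', so no restriction of $y$ can express setting a coordinate to its smallest value, which is what forcing $0$ requires.

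Friedgut's example makes this concrete. For $f(x)=1$ iff all $x_i\neq 0$, you get $g_p(y)=(1-1/n)^{n-|y|}\approx 1/e$ whenever $|y|=o(n)$. So $g_p$ is essentially constant with tiny influence, and no assignment to $o(n)$ of the $y$-coordinates moves it away from $1/e$. The one-coordinate coalition that forces $0$ is invisible in $y$-space.

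Lifting to a Boolean function by also writing each uniform $x_i$ in bits does not help. Those bits are not $p$-biased, and in this very example they carry total influence of order $\log\abs{\Sigma}$, so alphabet independence is lost. Your proposed repair (restrict the junta, shift the expectation by $\Omega(\eps)$, iterate $O(1/\eps)$ times) runs into the direction-switching problem from the introduction: successive restrictions can push the expectation in opposite directions. In any case, closeness to a pseudo-junta is weaker than having a coalition that forces $b$ with probability $1-\eps$.

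What your second case actually provides is the following. Let $\nu_p$ be the mixture of the uniform measure and the point mass at the maximum. Then there is some $p\le p^*$ with $\J^1(f,\nu_p)=(1-p)\frac{d}{dp}\Pr_{\nu_p^n}[f=1]\le 1/p^*$ and $\Pr_{\nu_p^n}[f=0]>\eps$.

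The missing ingredient is the paper's direction-switching lemma, \cref{lem: small 0 influence implies small coalition bias 1}: small total influence towards one output gives a coalition of size $\exp(O(\tau^2))$ forcing the other. The paper proves it as follows.
\begin{itemize}
\item Each coordinate is encoded by many $p$-biased bits through the max-selector $\phi_a$. This is chosen so that the encoded distribution is close to the target one and every nonconstant threshold of a coordinate becomes an OR of bits.
\item An OR of $k$ bits has $p$-biased influence $2kp(1-p)^k\le 2\Pr[\mathrm{OR}=1]$. Hence the encoded monotone Boolean function satisfies $\I^p(g)\le 2\J^0(f)$ up to negligible error, independently of $r$.
\item \cref{thm: hatami pseudo junta} then directly returns a coalition forcing $1$, with no junta-to-coalition conversion needed. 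That coalition pulls back to at most as many original coordinates.
\end{itemize}
Your first idea, $x_i=\#\{j:y_{i,j}=1\}$, was closer, but counting turns thresholds into ``at least $t$ ones'' rather than ORs, so it gives no such bound.

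With a mirrored version of that lemma for $\nu_p^n$, your random-path first case would close the argument. A coalition forcing $0$ under $\nu_p^n$ also forces $0$ under the uniform measure, since forcing $0$ is a decreasing event and $\nu_p$ stochastically dominates the uniform measure.

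Finally, the $\sqrt{\log n}$ comes from the quadratic exponent $\exp(O(\tau^2))$ in Hatami's bound. With the $\exp(O(I))$ you wrote, the same balancing would give $n/\log n$, so that part of your accounting is inconsistent.
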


Recall that, on the Boolean cube, KKL gives a coalition
of size $O(n/\log n)$. In the other direction, Ajtai and
Linial~\cite{AL93resilient} showed the existence of almost-balanced functions $f:\zo^n\to\zo$ 
that are resilient to coalitions of size
$\Omega(n/(\log n)^2)$. More precisely, for every constant
$\delta>0$, there is a constant $c>0$ such that every
coalition of at most $cn/(\log n)^2$ coordinates can change
the probability of either output by at most $\delta$.
This leaves a gap between the lower bound of
$\Omega(n/(\log n)^2)$, which already holds on the Boolean
cube, and our upper bound of $O(n/\sqrt{\log n})$.

\paragraph{Connection to collective coin flipping.}
The study of influential coalitions originated in collective
coin flipping, introduced by Ben-Or and Linial~\cite{BL85coin},
where players try to agree on a random bit despite adversarial
faults. This is a fundamental problem in fault-tolerant distributed computing and has been extensively studied (see the survey of Dodis~\cite{dodis2006fault}). 
In the full information model, players broadcast messages
using independent private randomness. An unbounded adversary
controls a fixed set of bad players and chooses their messages
after observing the good players' messages in the current round.
For one-round protocols with uniform messages from $\Sigma$,
this is exactly the setting discussed above: each coordinate
is a player's message, the function $f:\Sigma^n\to\zo$
determines the output, and controlling a player allows the
adversary to choose that coordinate after observing all
coordinates outside the coalition.

Thus, for single-bit messages, KKL shows that
$O(n/\log n)$ bad players suffice to force a fixed output
with probability at least $1-\eps$, for any constant
$\eps>0$. Our result gives a bound of
$O(n/\sqrt{\log n})$ bad players, regardless of the
message length.

\subsection{Proof Overview}
 We explain the main ideas in the proof of
\cref{informal_thm:intro_main} for monotone functions
$f:[r]^n\to\zo$, where $[r]=\{1,\ldots,r\}$.
Throughout this overview, we fix a constant error $\eps>0$.
As discussed above, repeatedly choosing an influential coordinate
need not make progress towards the same output.
Our approach is to decrease the expectation towards $0$
for as long as we can make sufficient progress, and then
show that a small coalition can force $1$ if this process stops.

For a monotone function $h:[r]^n\to\zo$, let $\J_j^0(h)$
denote the increase in the probability of output $0$ when
coordinate $j$ can be chosen after observing the remaining
coordinates, which are sampled uniformly and independently.
By monotonicity, this is exactly the decrease in expectation
obtained by fixing coordinate $j$ to its smallest value.
Set $\J^0(h)=\sum_j\J_j^0(h)$.
(See \cref{subsec:influence_defs} for formal definitions.)
Our key result is that, if $\E[h]\ge\eps$ and
$\J^0(h)\le\tau$ for $\tau > 0$, then a coalition of
$\exp(O(\tau^2))$ coordinates can force $1$ with probability
at least $1-\eps$. The implicit constant depends only on $\eps$.
This is formally stated as
\cref{lem: small 0 influence implies small coalition bias 1}.
Thus, small total influence towards $0$ gives a small
coalition that can force the opposite output.

Before proving the above key result, we show how it directly
implies our theorem. Starting with $f$, we repeatedly fix
coordinates to their smallest value to decrease the expectation.
Set $\delta=c\sqrt{\log n}/n$ for a sufficiently small
constant $c>0$. As long as the expectation exceeds $\eps$
and some coordinate has $\J_j^0\ge\delta$, we fix that
coordinate to its smallest value, namely $1$.
Each step decreases the expectation by at least $\delta$,
so there are at most $1/\delta=O(n/\sqrt{\log n})$ such steps.
If the expectation falls to at most $\eps$, then the coordinates we fixed form a coalition forcing the output $0$ with high probability, as desired.
Otherwise, the remaining function has expectation greater
than $\eps$ and total influence towards $0$ at most
$n\delta=c\sqrt{\log n}$.
Applying \cref{lem: small 0 influence implies small coalition bias 1},
we obtain a coalition of size $\exp(O(c^2\log n))$ that
can force the output to be $1$ with probability at least $1-\eps$.
For sufficiently small $c$, this coalition has size
at most $\sqrt n$.
By monotonicity, this coalition biases the output towards $1$ even without
controlling the previously fixed coordinates.
Thus, in either case, we obtain a coalition of size $O(n/\sqrt{\log n})$.

We next explain how to prove
\cref{lem: small 0 influence implies small coalition bias 1},
the central ingredient in our argument.
Our starting point is a theorem of Hatami~\cite{hatami12structure}
that gives a small coalition whenever the total $p$-biased
influence is small.
For a Boolean function on independent bits that each equal $1$
with probability $p$, the $p$-biased influence of a bit is
the probability that independently resampling the bit from the same distribution changes the output.
The total $p$-biased influence is the sum of these probabilities
over all bits.
(See \cref{def:resampling_influence} for a formal definition.)
The consequence of Hatami's theorem that we use states
that for a monotone function on independent $p$-biased bits (where $0<p\le 1/2$), with expectation at least $\eps$ and
total $p$-biased influence at most $\tau$, a coalition of
$\exp(O(\tau^2))$ bits can force the output to be $1$ with probability at least $1-\eps$.
The bound depends only on $\tau$ and $\eps$, even if there are arbitrarily many bits and $p$ is arbitrarily small. (We refer the reader to \cref{thm: hatami pseudo junta} for the precise statement.)

This gives a route to our lemma. Recall that we are given
a monotone function $h:[r]^n\to\zo$ with $\E[h]\ge\eps$
and $\J^0(h)\le\tau$.
We encode each coordinate of $h$ by independent $p$-biased
bits so that the resulting function has total $p$-biased
influence at most $O(\tau)$.
We must also approximately preserve the expectation and
ensure that a coalition of bits biasing the encoded function gives a coalition of at
most as many original coordinates biasing $h$. Applying Hatami's theorem to the encoded function would then give the desired bound.

To obtain such an encoding, we represent each coordinate
by independent bits that equal $1$ with probability $p$.
To do this, each bit is assigned a value in $[r]$, and the encoding outputs the largest value assigned to a bit that equals
$1$ (or $1$ if all bits are $0$).
With suitable repetitions of the assigned values and
sufficiently small $p$, the output approximates the uniform
distribution on $[r]$.
The important property is that every (non-constant) threshold function over $[r]$ corresponds to an OR function over a subset of the $p$-biased bits.
Since fixing all other coordinates of a monotone $h$ leaves
a threshold or a constant function, this lets us bound the
total $p$-biased influence of the encoded function by
$2\J^0(h)$, up to an arbitrarily small approximation error.
This is proved in \cref{lem: approx product p biased}.
Now applying \cref{thm: hatami pseudo junta} gives the desired
coalition of bits. These bits belong to at most as many original
coordinates, so controlling those coordinates gives a
coalition for $h$ with no dependence on $r$, as desired.

\paragraph{Organization} We introduce necessary background in \cref{sec:prelim}. In \cref{sec:coalition_bound}, we prove our main result that bounds the size of an influential coalition, using a reduction to the biased Boolean cube. We present and analyze this reduction in \cref{sec:biased_space}. 
\section{Preliminaries}\label{sec:prelim}

\paragraph{Notation.}
All logarithms, unless specified otherwise, have base $2$.
For $b\in \zo$, we write $\overline{b}$ to mean $1-b$.
For a string $x\in\Sigma^n$ and a set $S\subset[n]$,
let $x_S$ denote its restriction to the coordinates in
$S$, and let $x_{-S}$ denote its restriction to the
remaining coordinates.
We use $\U_n^r$ to denote the uniform distribution over
$[r]^n$. When $r$ is clear from context, we write $\U_n$.
For $x,y$ in either $\zo^n$ or $[r]^n$ or $[0, 1]^n$, we write
$x\le y$ if $x_i\le y_i$ for every $i\in[n]$.
A Boolean function $f$ on either domain is monotone
if $x\le y$ implies $f(x)\le f(y)$.

\subsection{Influence of Boolean Functions}\label{subsec:influence_defs}
We use two notions of influence: Influence
on the biased Boolean cube and influence towards a
fixed output in product spaces.

\paragraph{p-biased influence.}
Let $\mu_p^n$ denote the distribution on $\zo^n$ obtained
by independently setting each bit to $1$ with probability
$p$. When $n$ is clear from context, we will just write
$\mu_p$.

We use the following notion of influence with
respect to the $p$-biased distribution, as in
Hatami~\cite{hatami12structure}.

\begin{definition}[$p$-biased influence]
\label{def:resampling_influence}
For any $n\in\N$, $p\in[0,1]$, and function
$f:\zo^n\to\zo$, we define the $p$-biased influence of
variable $i$ on $f$ as:
\[
    \I^p_i(f)
    = \Pr_{x,y\sim\mu_p}\big[
    f(x_1,\dots,x_{i-1},x_i,x_{i+1},\dots,x_n)
    \ne
    f(x_1,\dots,x_{i-1},y_i,x_{i+1},\dots,x_n)
    \big].
\]
With this, we define the total $p$-biased influence of
$f$ as:
\[
    \I^p(f)=\sum_{i=1}^n\I^p_i(f).
\]
\end{definition}

\paragraph{Influence towards a fixed output.}
We next define the influence towards a direction.
Before that, we need notation for the probability that
a coalition can force a given output.

Throughout, let $\Sigma$ be a finite set, let $n\in\N$, let $f:\Sigma^n\to\zo$ be a
 function, and let $\sigma$ be an arbitrary
distribution on $\Sigma$. Let $\sigma^n$ be the product distribution
obtained by sampling each coordinate independently from $\sigma$.

For $i\in[n]$ and $b\in\zo$, we define the following quantity: 
\[
    \P^b_i(f,\sigma) = \Pr_{x\sim\sigma^n}\big[ \exists j\in \Sigma: f(x_1,\dots,x_{i-1},j,x_{i+1},\dots,x_n)=b \big].
\]
We also extend this notion to sets.
For $S\subseteq[n]$ and $b\in\zo$, define
\[
    \P^b_S(f,\sigma)
    = \Pr_{x\sim\sigma^n}\big[
    \exists j\in\Sigma^S: f(j,x_{-S})=b
    \big].
\]
Often we do not mention the distribution $\sigma$, in which case the underlying distribution is uniform over the domain. We use the same definitions for monotone or Borel measurable functions $f:[0,1]^n\to\zo$, with each coordinate sampled uniformly
from $[0,1]$ and coalition assignments chosen from $[0,1]^S$. These probabilities are well defined when $f$ is Borel measurable: coalition-success events are projections of Borel sets, and hence are analytic and Lebesgue measurable.\footnote{This is not true for functions $f$ that are merely Lebesgue measurable. Hence, all our statements are stated only for monotone $f$ or Borel measurable $f$.} When $f$ is monotone, the coalition can force output $b$ precisely when setting every coordinate in $S$ to $b$
gives output $b$. The resulting restricted function is monotone and hence Lebesgue measurable.

Using these, define the influence towards direction
$b\in\zo$ of coordinate $i\in[n]$ as:
\[
    \J^b_i(f,\sigma)
    = \P^b_i(f,\sigma)
    - \Pr_{x\sim\sigma^n}[f(x)=b].
\]
We finally define
$\J^1(f,\sigma)=\sum_i\J^1_i(f,\sigma)$ and
$\J^0(f,\sigma)=\sum_i\J^0_i(f,\sigma)$.
As above, when we omit $\sigma$, the underlying distribution will be the uniform distribution.

\subsection{Monotonizing and Discretizing Boolean functions}

We present results that let us without loss of generality assume that our given function is monotone and is over finite alphabet.

\paragraph{Monotonizing Boolean Functions}
When proving lower bounds on the influence of coalitions, it suffices
to consider monotone Boolean functions. This follows from the following
monotonization lemma, proved using a sifting argument similar to those
in \cite{BL85coin, KKL88influence, BKKKL92product}.
\begin{claim}\label{claim:monotonization}
For every Borel measurable function $f: [0,1]^n \to \zo$, there exists a monotone 
function $g: [0,1]^n \to \zo$ such that $\E[g]=\E[f]$ and,
for all $S\subseteq [n]$ and $b\in \zo$, it holds that
$\P^b_S(g) \le \P^b_S(f)$.
Furthermore, if $f: [r]^n \to \zo$, then such a monotone
function $g: [r]^n \to \zo$ exists with these properties.
\end{claim}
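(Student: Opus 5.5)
We use the standard one-coordinate-at-a-time ``shifting'' (sifting) operation, applied once per coordinate, and check that each shift preserves the expectation and does not increase any coalition probability $\P^b_S$.

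\textbf{The shift.} Fix a coordinate $i$. For each $x_{-i}$, the fiber $t\mapsto f(t,x_{-i})$ is a Boolean function of $t\in[0,1]$ (or of $t\in[r]$) with mean $m(x_{-i})$. Define
\[
S_i f(t,x_{-i})=\mathbf 1\bigl[t>1-m(x_{-i})\bigr]
\]
in the continuous case. In the finite case, set $S_if(t,x_{-i})=\mathbf 1\bigl[t>r-k(x_{-i})\bigr]$, where $k(x_{-i})$ is the number of ones in the fiber. So each fiber is replaced by an up-threshold with the same number, or measure, of ones. Then $\E[S_if]=\E[f]$ by Fubini.

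\textbf{Coalition probabilities.} We claim $\P^b_S(S_if)\le\P^b_S(f)$ for every $S$ and $b$.
\begin{itemize}
\item If $i\in S$: coordinate $i$ is chosen freely. For fixed $x_{-S}$, the coalition can reach $b$ in $S_if$ iff some choice of $x_{S\setminus i}$ gives a fiber containing a $b$-value. For a threshold fiber this happens iff the fiber is not constantly $\bar b$. That holds iff the original fiber also contains a $b$-value.
\item If $i\notin S$: condition on $x_{-S-i}$. For each $y\in\Sigma^S$ let $A_y=\{t: f(y,t,\cdot)=b\}$. The success set in $t$ is $\bigcup_y A_y$ for $f$, and $\bigcup_y A'_y$ for $S_if$, where $A'_y$ is an up- or down-interval with $|A'_y|=|A_y|$. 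Nested intervals give $\bigl|\bigcup_y A'_y\bigr|=\sup_y|A_y|\le\bigl|\bigcup_y A_y\bigr|$. In the continuous case the sup over uncountably many $y$ needs care; we handle it below.
\end{itemize}

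\textbf{Monotonicity of the result.} Apply $S_1,\dots,S_n$ in turn. The standard fact is that $S_j$ preserves monotonicity in every coordinate $i\neq j$ already made monotone. The reason: if $f$ is monotone in $x_i$, then the fiber means $m$ in coordinate $j$ are monotone in $x_i$, so the thresholds move the right way. The final $g$ is monotone in each coordinate separately, hence monotone.

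\textbf{Main obstacle: measurability.} In the continuous setting, the measurability of $m(x_{-i})$ is fine by Fubini, since $f$ is Borel, so $S_1f$ is Borel. The real issue is the inequality $\sup_y|A_y|\le\bigl|\bigcup_y A_y\bigr|$ with uncountably many $y$. Here $\bigcup_y A_y$ is analytic, hence measurable, and it contains each $A_y$, so the inequality still holds. After the first shift we can also use monotonicity in already-treated coordinates. If this gets delicate, the fallback is to first approximate by the finite case: discretize, prove the finite statement, and pass to the limit. The finite case $[r]^n$ is purely combinatorial, using counting in place of measure, and the $\sup$ becomes a $\max$.
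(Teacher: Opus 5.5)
Your route is the paper's: coordinate-by-coordinate sifting, replacing each fiber by a threshold of the same mass, the nested-interval argument for $i\notin S$, and the observation that fiber means are monotone in already-treated coordinates. But the continuous-case shift as you wrote it fails in the case $i\in S$, $b=0$. Because $t\in[0,1]$, the condition $0>1-m(x_{-i})$ never holds, so $S_if(0,x_{-i})=0$ for \emph{every} $x_{-i}$. In particular, a fiber of mean $1$ becomes $\mathbf 1[t>0]$ rather than the constant $1$, so coordinate $i$ alone forces $0$ with certainty in $S_if$. For $f\equiv 1$ you get $S_if=\mathbf 1[x_i>0]$ and $\P^0_{\{i\}}(S_if)=1>0=\P^0_{\{i\}}(f)$. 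Later shifts keep $S_if(0,\cdot)\equiv 0$, since those fibers have mean $0$. So your final $g$ vanishes as soon as any coordinate is $0$, and every single coordinate forces output $0$, which would make the claim useless for the coalition theorem. The false step is ``the threshold fiber is not constantly $\bar b$ iff the original fiber contains a $b$-value'': on $[0,1]$ your threshold fiber is never constantly $1$, while the original fiber may be. Switching to $t\ge 1-m$ only moves the defect to mean $0$ and $b=1$. Your finite case is fine, since with $k=r$ the fiber $\mathbf 1[t>0]$ is identically $1$ on $[r]$.

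The repair, which is exactly what the paper does, is to treat the endpoints separately. If the fiber mean is $e\in\{0,1\}$, replace the fiber by the constant $e$, and use a threshold only when the mean lies in $(0,1)$. Then, for $i\in S$, the new fiber never takes a value the old one did not. A fiber of mean $e\in\{0,1\}$ takes the value $e$ on a full-measure set, and a fiber of mean in $(0,1)$ takes both values on positive-measure sets. You then need to recheck that the replacement is still pointwise nondecreasing in the mean. It is: at $t=0$ the value is $0,0,1$ for mean $0$, mean in $(0,1)$, and mean $1$, and at $t=1$ it is $0,1,1$. So your monotonicity-preservation argument survives. The $i\notin S$ argument is unaffected because it only compares measures.
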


\begin{proof}
This follows by a standard sifting argument.
Write $D=[0,1]$ or $D=[r]$, as appropriate, and let $\mu$
denote the uniform probability measure on $D$.
For each $i\in [n]$ and assignment $x_{-i}\in D^{n-1}$,
let $\ell_{i,x}(t)=f(x_{-i},t)$.
We replace this function by a monotone function
$m_{i,x}:D\to\zo$ with the same expectation, as follows.

If $\E[\ell_{i,x}] = b$ for $b\in \zo$, then we define $m_{i, x} \equiv b$.
Otherwise, when $D=[0,1]$, define $m_{i,x}(t)=1$ if and only if
$t\ge 1-\E[\ell_{i,x}]$.
When $D=[r]$, set
$k=\abs{\{t\in[r]:\ell_{i,x}(t)=1\}}$ and define
$m_{i,x}(t)=1$ if and only if $t>r-k$.
Thus, in the discrete case, the ones occupy precisely the
largest $k$ elements of $[r]$.

We monotonize $f$ coordinate by coordinate, with the operation
in coordinate $i$ replacing every $\ell_{i,x}$ by $m_{i,x}$.
Each operation preserves the expectation.
It also preserves monotonicity in every previously treated
coordinate: if two assignments to the other coordinates are
ordered in such a coordinate, then their original functions
are pointwise ordered. Their expectations are therefore ordered,
and the replacements remain pointwise ordered, including when
one of the original functions has expectation $0$ or $1$.
Consequently, the final function $g$ is monotone and satisfies
$\E[g]=\E[f]$.
We argue each step preserves Borel measurability, so that the operations can be iterated, below.

It remains to show that a single operation, taking $f$ to $f'$
by monotonizing coordinate $i$, satisfies
$\P^b_S(f')\le \P^b_S(f)$ for every $S\subseteq[n]$ and $b\in\zo$ and that $f'$ is Borel measurable.
First, suppose that $i\in S$.
The replacement does not introduce an output value that was
absent from the original one-variable function.
Indeed, if the expectation of this restricted function were $e\in \zo$, then it must attain the value $e$ over a measure $1$ set, and hence changing this to constant $e$ does not introduce any new value; otherwise, the output values available to the functions are $\zo$ and are left unchanged.
Thus, every assignment outside $S$ that allows the coalition
to force $b$ under $f'$ also allows it to force $b$ under $f$.

Now suppose that $i\notin S$.
Fix an assignment $\alpha$ to the coordinates in
$[n]\setminus(S\cup\{i\})$.
For each $\beta\in D^S$, define
\[
    C_\beta=\{t\in D:f(t,\alpha,\beta)=b\},
    \qquad
    C'_\beta=\{t\in D:f'(t,\alpha,\beta)=b\}.
\]
Conditioned on $\alpha$, the probability that the coalition
can force $b$ under $f$ is
$\mu(\bigcup_\beta C_\beta)$.
The replacement preserves
$\mu(C'_\beta)=\mu(C_\beta)$.
Moreover, the sets $C'_\beta$ are all upper intervals when $b=1$
and all lower intervals when $b=0$, and hence are nested.
For either choice of $D$, this gives
\[
    \mu\left(\bigcup_\beta C'_\beta\right)
    =\sup_\beta\mu(C'_\beta)
    =\sup_\beta\mu(C_\beta)
    \le \mu\left(\bigcup_\beta C_\beta\right).
\]
Averaging over $\alpha$, we obtain
$\P^b_S(f')\le \P^b_S(f)$.
Applying this at every step proves the claim in both cases.

Lastly, we show that $f'$ thus obtained is Borel measurable. First, the map $x_{-i}\mapsto\E[\ell_{i,x}]$ is Borel measurable by the Fubini--Tonelli theorem.
Second, $m_{i, x}$ depends on $\ell_{i, x}$ only through $\theta = \E[\ell_{i, x}]$. Letting $m_{\theta}$ be the corresponding function, we see that the map $(\theta,t)\mapsto m_{\theta}(t)$ is the indicator of a Borel subset of $[0, 1]\times D$. As $f'(x) = m_{\E[\ell_{i, x}]}(x_i)$, a composition of these Borel maps, it is Borel measurable as claimed.
\end{proof}

\paragraph{Discretizing Boolean Functions}
We next show how to discretize a monotone function while
approximately preserving its expectation and the probability
that any coalition can force either output.
\begin{claim}\label{claim:discretization}
For all $n, r\in \N$, with $r \ge 2$, the following holds.
For any monotone function $f: [0, 1]^n \to \zo$, there exists a monotone function $g: [r]^n \to \zo$ such that $\abs{\E[f] - \E[g]}\le \frac{n}{r}$, and for any $S\subseteq [n]$ and $b\in \zo$, we have that $\abs{\P^b_S(g) - \P^b_S(f)} \le \frac{n}{r}$.
\end{claim}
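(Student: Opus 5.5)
We will define $g$ by rounding each coordinate up to the right endpoint of its cell and then reading off $f$. Concretely, set $g(j_1,\dots,j_n)=f(j_1/r,\dots,j_n/r)$ for $j\in[r]^n$. This $g$ is monotone because $f$ is. To compare $g$ with $f$, we couple the two product spaces. Draw $x\sim\U([0,1]^n)$ and let $j_i=\lceil r x_i\rceil\in[r]$ (ignoring the measure-zero event $x_i=0$). Then $j$ is uniform on $[r]^n$, and $x_i\le j_i/r$ for every $i$, so $f(x)\le g(j)$ pointwise by monotonicity. The natural ``lower'' comparison is $f((j-1)/r)\le f(x)$. Here we would like a shifted copy, so we also introduce $g^-(j)=f((j_1-1)/r,\dots,(j_n-1)/r)$, which satisfies $g^-(j)\le f(x)\le g(j)$.

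The key step is to bound $\E[g]-\E[g^-]$ by a telescoping argument. Move from the point $(j-1)/r$ to $j/r$ one coordinate at a time, raising coordinate $i$ by $1/r$ while the other coordinates sit at an intermediate grid point. Under the uniform choice of $j$, the probability that a given step flips the value is at most $1/r$. The reason is that, with the other coordinates fixed, $t\mapsto f(\cdot,t,\cdot)$ restricted to the grid $\{0,1/r,\dots,1\}$ is a monotone threshold. So as $j_i$ ranges uniformly over $[r]$, the pair $\big((j_i-1)/r,\,j_i/r\big)$ straddles the threshold for at most one value of $j_i$. Summing over the $n$ coordinates gives $\E[g]-\E[g^-]\le n/r$, and hence $|\E[f]-\E[g]|\le n/r$.

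For coalitions, monotonicity reduces $\P^b_S$ to evaluating at an extreme assignment. For $b=1$, $\P^1_S(f)=\Pr_x[f(\mathbf 1_S,x_{-S})=1]$. For $g$ the analogous extreme is setting $S$ to the value $r$, which corresponds to $f(\mathbf 1_S,j_{-S}/r)$. So $\P^1_S(g)=\E[h^+]$ and $\P^1_S(f)=\E[h]$, where $h=f(\mathbf 1_S,\cdot)$ is a monotone function of the remaining coordinates and $h^+$ is its rounded-up version. The same sandwich-and-telescope argument applied to $h$ gives a difference of at most $(n-|S|)/r\le n/r$.

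For $b=0$ the two extremes do not match directly. Forcing $0$ in $f$ sets $S$ to $0$, but in $g$ the smallest value $1$ corresponds to the point $1/r$. To absorb this mismatch, we include the coordinates of $S$ in the telescoping as well. We compare $f(0_S,(j_{-S}-1)/r)\le f(0_S,x_{-S})$ with $g(1_S,j_{-S})=f((1/r)_S,j_{-S}/r)$. Moving from $0$ to $1/r$ in a coordinate of $S$ costs probability at most $1/r$, by the same threshold argument with $j_i$ replaced by a uniform dummy index. This keeps the total bound at $n/r$, with one step per coordinate.

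The main obstacle is making the per-step bound of $1/r$ rigorous when the other coordinates are random grid points of mixed ``shift''. We handle it by conditioning on all other indices and using only the single-threshold property of a monotone one-variable Boolean function on the grid. We must also ensure that the steps use disjoint coordinates, so the total stays $n/r$ rather than $2n/r$. This is why each coordinate is moved exactly once. Measurability is automatic, since all the restricted functions involved are monotone.
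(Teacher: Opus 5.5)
\textbf{Gap in the $b=0$ case.} Your treatment of $\E[f]$ and of the case $b=1$ is correct, but the case $b=0$ fails. In fact the statement is false for your choice $g(j)=f(j/r)$. The step that breaks is ``moving from $0$ to $1/r$ in a coordinate of $S$ costs probability at most $1/r$, by the same threshold argument with $j_i$ replaced by a uniform dummy index.''

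The threshold argument only bounds the flip probability of the step $\bigl((u-1)/r,\,u/r\bigr)$ on average over a uniformly random position $u\in[r]$. Here the position is fixed at $u=1$, so there is nothing to average over. If, for most settings of the other coordinates, the one-variable threshold lies in $(0,1/r]$, this single step flips the value with probability close to $1$.

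Concretely, take $n=1$, $f(x)=\mathbf{1}[x_1>0]$ and $S=\{1\}$. The coalition forces $0$ by setting $x_1=0$, so $\P^0_S(f)=1$. But $g(j)=\mathbf{1}[j/r>0]\equiv 1$ on $[r]$, so $\P^0_S(g)=0$. The difference is $1>1/r$. The mismatch you noticed between the extreme point $0$ for $f$ and the point $1/r$ for $g$ cannot be absorbed; it has to be avoided.

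\textbf{The fix.} Use a grid that contains both endpoints, as the paper does:
\[
g(y)=f\bigl((y_1-1)/(r-1),\dots,(y_n-1)/(r-1)\bigr).
\]
Then the extreme values $1$ and $r$ map to $0$ and $1$, so the optimal coalition assignments correspond exactly for both $b$. After this change your coupling argument goes through unchanged, because
\[
(j-1)/r\le (j-1)/(r-1)\le j/r \quad\text{for } j\in[r].
\]
Fix $S$ to $0_S$ or $1_S$ and let $h$ be the resulting monotone function of the remaining coordinates. Under your coupling $j_i=\lceil r x_i\rceil$, both $h(x_{-S})$ and $h\bigl((j_{-S}-1)/(r-1)\bigr)$ are sandwiched between $h\bigl((j_{-S}-1)/r\bigr)$ and $h(j_{-S}/r)$. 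By your telescoping, the expectations of these two bounds differ by at most $(n-|S|)/r$.

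\textbf{Comparison with the paper.} The paper instead runs a coordinate-by-coordinate hybrid. It directly compares the measure of a one-variable threshold under the uniform distribution on $[0,1]$ with its measure under the uniform distribution on $\{0,1/(r-1),\dots,1\}$. With the corrected grid, your sandwich argument is a slightly cleaner route to the same $1/r$ loss per coordinate.
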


\begin{proof}
We define $g$ as follows:
\[
g(y) = f\left(\frac{y_1-1}{r-1}, \dots, \frac{y_n-1}{r-1}\right).
\]
As $f$ is monotone, we immediately have that $g$ too is monotone.
We then proceed by an appropriate hybrid argument, discretizing one coordinate at a time.
When discretizing a single coordinate, by monotonicity of $f$, the change in expectation is at most $\frac{1}{r}$. This can be seen as follows: on fixing all other coordinates, the restricted function is either constant or a threshold. Thus, its probability under the uniform distribution on $\{0,1/(r-1),\dots,1\}$ differs from
its probability under the uniform distribution on $[0,1]$ by at most $1/r$.
By triangle inequality, we infer that $\abs{\E[f] - \E[g]} \le \frac{n}{r}$ as desired.

We similarly argue regarding $\P^b_S(g)$ and $\P^b_S(f)$ for any $S\subseteq [n]$ and $b\in \zo$.
First, since $f$ and $g$ are monotone and discretization maps values $1$ and $r$ to $0$ and $1$ respectively, we see that discretizing coordinates of $S$ preserves these quantities, and we assume that those coordinates are fixed to $1$ or $r$ for $b = 0$ and $b = 1$ respectively.
Then, for other coordinates, we argue as earlier and discretize coordinate by coordinate, observing that the change in probability is at most $\frac{1}{r}$.
Applying appropriate triangle inequalities then, we obtain that $\abs{\P^b_S(g) - \P^b_S(f)} \le \frac{n - |S|}{r} \le \frac{n}{r}$, as desired.
\end{proof}

\subsection{Basic Probability notions}

For distributions $\X,\Y$ on the same finite set $\Omega$, let $\abs{\X-\Y}=\frac12\sum_{x\in\Omega} \abs{\Pr[\X=x]-\Pr[\Y=x]}$ denote their statistical distance.
Next, let $\norm{\X-\Y}_{\infty} = \max_{w \in \Omega} \abs{\Pr[\X = w] - \Pr[\Y = w]}$ denote the $\ell_{\infty}$ distance between them.

We record the following useful inequality.
\begin{fact}[Data Processing Inequality]
\label{fact: dpi}
Let $\Omega, \Omega'$ be arbitrary finite sets.
Let $\X, \Y\sim \Omega$ be such that $\abs{\X-\Y}\le \eps$.
Then, for any $f: \Omega \to \Omega'$, we have that $\abs{f(\X) - f(\Y)}\le \eps$.
\end{fact}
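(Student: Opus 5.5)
The statement to prove is the Data Processing Inequality (\cref{fact: dpi}): for distributions $\X,\Y$ on a finite set $\Omega$ with $\abs{\X-\Y}\le\eps$ and any $f:\Omega\to\Omega'$, we have $\abs{f(\X)-f(\Y)}\le\eps$. The plan is to write the statistical distance of the pushforwards directly in terms of the original probabilities and apply the triangle inequality.

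First, I would express the pushforward probabilities through fibers. For each $w\in\Omega'$,
\[
\Pr[f(\X)=w]=\sum_{x\in f^{-1}(w)}\Pr[\X=x],
\]
and similarly for $\Y$. Substituting this into the definition gives
\[
\abs{f(\X)-f(\Y)}=\frac12\sum_{w\in\Omega'}\Bigl|\sum_{x\in f^{-1}(w)}\bigl(\Pr[\X=x]-\Pr[\Y=x]\bigr)\Bigr|.
\]

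Next, I would apply the triangle inequality inside each outer term, bounding it by $\sum_{x\in f^{-1}(w)}\abs{\Pr[\X=x]-\Pr[\Y=x]}$. The fibers $f^{-1}(w)$ for $w\in\Omega'$ partition $\Omega$, with empty fibers contributing zero. So the double sum collapses to $\frac12\sum_{x\in\Omega}\abs{\Pr[\X=x]-\Pr[\Y=x]}=\abs{\X-\Y}\le\eps$.

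An alternative route uses the characterization $\abs{\X-\Y}=\max_{A\subseteq\Omega}\abs{\Pr[\X\in A]-\Pr[\Y\in A]}$. Every event $B\subseteq\Omega'$ pulls back to the event $f^{-1}(B)\subseteq\Omega$, which gives the same bound immediately. I would use the direct computation, though, since it needs only the definition stated in the paper.

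There is no real obstacle here. The only point requiring care is noting that the fibers partition $\Omega$, so that no term is double counted.
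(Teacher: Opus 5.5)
Your proof is correct: writing each pushforward probability as a sum over the fiber $f^{-1}(w)$, applying the triangle inequality inside each term, and using that the fibers partition $\Omega$ gives exactly $\abs{f(\X)-f(\Y)}\le\abs{\X-\Y}\le\eps$. The paper states this as a standard fact and gives no proof, so your direct computation (or the event-pullback argument you mention as an alternative) supplies the justification it leaves implicit.
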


From this, we easily obtain the following inequality.
\begin{claim}
\label{claim: data processing expectation}
Let $\X,\Y\sim \Omega$ and $\eps > 0$ be such that $|\X - \Y| \le \eps$.
Then, for any $f: \Omega\to \zo$, we have that
\[
\abs{\E_{x\sim \X}[f(x)] - \E_{y\sim \Y}[f(y)]} \le \eps.
\]
\end{claim}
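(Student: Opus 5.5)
The plan is to reduce the claim directly to \cref{fact: dpi}, using the fact that a Boolean function is itself a map into the two-element set $\zo$. Apply \cref{fact: dpi} with $\Omega'=\zo$ and the given $f$. This gives $\abs{f(\X)-f(\Y)}\le\eps$, where $f(\X)$ and $f(\Y)$ are now distributions on $\zo$.

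Next, compute the statistical distance between two distributions on $\zo$ explicitly. Write $a=\Pr[f(\X)=1]=\E_{x\sim\X}[f(x)]$ and $b=\Pr[f(\Y)=1]=\E_{y\sim\Y}[f(y)]$. Then
\[
\abs{f(\X)-f(\Y)}=\tfrac12\bigl(\abs{a-b}+\abs{(1-a)-(1-b)}\bigr)=\abs{a-b}.
\]
Combining this with the bound from the first step yields $\abs{a-b}\le\eps$, which is exactly the claimed inequality.

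Nothing here is a real obstacle. The only points to check are the identification of the expectation of a $\zo$-valued function with the probability that it equals $1$, and the two-point computation of statistical distance above. If one prefers to avoid \cref{fact: dpi} entirely, there is a direct alternative. Write $\E[f(\X)]-\E[f(\Y)]=\sum_{x\in f^{-1}(1)}(\Pr[\X=x]-\Pr[\Y=x])$, and note that the absolute value of this sum is at most $\max_{T\subseteq\Omega}\abs{\Pr[\X\in T]-\Pr[\Y\in T]}$. The standard identity says this maximum equals $\abs{\X-\Y}$, so the bound $\eps$ follows again.
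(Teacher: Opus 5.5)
Your proposal is correct and follows the paper's proof: apply \cref{fact: dpi} with $\Omega'=\zo$, then identify $\abs{f(\X)-f(\Y)}$ with $\abs{\E[f(\X)]-\E[f(\Y)]}$. Your two-point computation of the statistical distance justifies the equality that the paper states without comment.
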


\begin{proof}
Fix any such $f, \X, \Y$.
Then, 
\begin{align*}
\abs{\E_{x\sim \X}[f(x)] - \E_{y\sim \Y}[f(y)]}   
& = \abs{f(\X) - f(\Y)}\\
& \le \eps.
\end{align*}
where last step follows by \cref{fact: dpi}.
\end{proof}

\section{A bound on influential coalition size} \label{sec:coalition_bound}

In this section, we will prove the following main result regarding biasing a function over $[r]^n$.
\begin{theorem}
\label{thm: corrupt finite alphabet}
For all constant $\eps > 0$, there exists a constant $C = C(\eps) > 0$ such that the following holds.
For all $r, n\in \N$ and function $f: [r]^n \to \zo$, there exists $b\in \zo$, and $B\subseteq [n]$ with $\abs{B}\le C\cdot \frac{n}{\sqrt{\log(n)}}$ such that
\[
    \P_B^b(f) \ge 1 - \eps.
\]
\end{theorem}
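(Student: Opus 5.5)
We first use \cref{claim:monotonization} (in its finite-alphabet form) to replace $f$ by a monotone $g:[r]^n\to\zo$ with $\E[g]=\E[f]$ and $\P^b_S(g)\le\P^b_S(f)$ for all $S,b$. A coalition biasing $g$ therefore also biases $f$, so we may assume $f$ is monotone. We may also assume $n$ is larger than a constant depending on $\eps$; otherwise we take $B=[n]$ and enlarge $C$. If $\E[f]\le\eps$, then $B=\emptyset$ with $b=0$ already works, so suppose $\E[f]>\eps$.

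The plan is the greedy descent from the overview. Set $\delta=c\sqrt{\log n}/n$, with $c=c(\eps)>0$ small, and put $h_0=f$. While $\E[h_t]>\eps$ and some remaining coordinate $j$ has $\J^0_j(h_t)\ge\delta$, we fix $x_j=1$ and call the result $h_{t+1}$. By monotonicity, fixing $x_j=1$ lowers the expectation by exactly $\J^0_j(h_t)\ge\delta$, since the existential choice in $\P^0_j$ is attained at the value $1$. Hence there are at most $1/\delta=n/(c\sqrt{\log n})$ steps. Let $A$ be the set of fixed coordinates.

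If the process stops because $\E[h_t]\le\eps$, take $B=A$ and $b=0$. Setting $A$ to all ones gives output $0$ with probability at least $1-\eps$, so $\P^0_A(f)\ge1-\eps$. Otherwise $h:=h_t$ is a monotone function on the remaining $n-|A|$ coordinates with $\E[h]>\eps$ and $\J^0(h)\le n\delta=c\sqrt{\log n}$. We then invoke \cref{lem: small 0 influence implies small coalition bias 1}, the key result stated later, which I assume in the form: if $\E[h]\ge\eps$ and $\J^0(h)\le\tau$, there is a set $B'$ with $|B'|\le\exp(K_\eps(1+\tau^2))$ and $\P^1_{B'}(h)\ge1-\eps$. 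With $\tau=c\sqrt{\log n}$ this gives $|B'|\le e^{K_\eps}n^{K_\eps c^2/\ln 2}\le\sqrt n$ once $c$ is small and $n$ large. Note that $B'$ avoids $A$.

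Finally we lift $B'$ back to $f$. By monotonicity, $f(x)\ge h(x_{-A})$ pointwise, because $h$ is $f$ with the coordinates of $A$ set to their minimum. So setting $B'$ to $r$ forces $f=1$ whenever it forces $h=1$, and for a monotone function this is exactly the event that the coalition can force $1$. Hence $\P^1_{B'}(f)\ge\P^1_{B'}(h)\ge1-\eps$. In either case the coalition has size at most $\max(n/(c\sqrt{\log n}),\sqrt n)\le C n/\sqrt{\log n}$.

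The main obstacle is not this bookkeeping but \cref{lem: small 0 influence implies small coalition bias 1} itself. That lemma rests on the encoding into $p$-biased bits, which bounds $\I^p$ by $2\J^0(h)$, together with Hatami's theorem. Here I only take care that its quantitative form $\exp(O(\tau^2))$ is what is used, and that $c$ is chosen after $K_\eps$.
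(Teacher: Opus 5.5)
Your proposal is correct and follows the paper's proof essentially step for step. The shared steps are: monotonize; greedily fix any coordinate with $\J^0_j\ge\delta=\Theta(\sqrt{\log n}/n)$ to $1$; either stop with a $b=0$ coalition of size $O(1/\delta)$, or apply \cref{lem: small 0 influence implies small coalition bias 1} with $\tau=\delta n$ and lift the resulting coalition back to $f$ by monotonicity. Your extra handling of small $n$ and your use of the weaker bound $\exp(K_\eps(1+\tau^2))$, which the lemma's $\exp(C\tau^2)$ implies, are harmless refinements.
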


We obtain the same coalition bound for monotone or Borel measurable Boolean functions on $[0,1]^n$.
\begin{corollary}\label{thm:main_influence_uniform_measure}
For all constant $\eps > 0$, there exists a constant $C = C(\eps) > 0$ such that the following holds.
For any $n\in \N$, consider $[0, 1]^n$ with the uniform measure.
Let $f: [0, 1]^n \to \zo$ be an arbitrary monotone or Borel measurable function. Then, there exists $b\in \zo$ and $B\subseteq [n]$ with $\abs{B}\le C\cdot \frac{n}{\sqrt{\log(n)}}$ such that
\[
\P^b_B(f) \ge 1 - \eps.
\]
\end{corollary}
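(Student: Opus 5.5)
The statement to prove is \cref{thm:main_influence_uniform_measure}. I would deduce it from \cref{thm: corrupt finite alphabet} using the two preliminary claims. The plan has three steps: reduce to a monotone function, discretize, and pull the coalition back.

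\textbf{Step 1: reduce to the monotone case.} If $f$ is Borel measurable, apply \cref{claim:monotonization} to obtain a monotone $g:[0,1]^n\to\zo$ with $\P^b_S(g)\le \P^b_S(f)$ for every $S$ and $b$. Any coalition $B$ and target $b$ with $\P^b_B(g)\ge 1-\eps$ then also satisfy $\P^b_B(f)\ge 1-\eps$. So it suffices to treat monotone $f$, and the monotone case in the statement is covered directly.

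\textbf{Step 2: discretize.} Let $f$ be monotone. Choose $r\ge 2$ with $n/r\le \eps/2$, for instance $r=\lceil 2n/\eps\rceil+1$. By \cref{claim:discretization} there is a monotone $g:[r]^n\to\zo$ such that $\abs{\P^b_S(g)-\P^b_S(f)}\le \eps/2$ for all $S$ and $b$. Apply \cref{thm: corrupt finite alphabet} to $g$ with error parameter $\eps/2$. This yields $b\in\zo$ and $B$ with $\abs{B}\le C(\eps/2)\, n/\sqrt{\log n}$ and $\P^b_B(g)\ge 1-\eps/2$.

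\textbf{Step 3: pull the coalition back.} The discretization bound gives $\P^b_B(f)\ge 1-\eps$. Setting $C'(\eps)=C(\eps/2)$ finishes the proof. The key point is that the bound in \cref{thm: corrupt finite alphabet} does not depend on $r$, so we may take $r$ as large as needed.

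\textbf{Main obstacle.} The only delicate point is measurability, and it is already handled: the definition of $\P^b_S$ for Borel $f$ and the Borel-preservation part of \cref{claim:monotonization} take care of it. Small $n$, for example $n=1$ where $\log n=0$, should be handled by enlarging the constant, since $B=[n]$ always works. So I expect no real difficulty here. The substance lies in \cref{thm: corrupt finite alphabet} itself, whose proof follows the overview: iterated restrictions along large $\J^0_j$, followed by the encoding into $p$-biased bits and Hatami's theorem.
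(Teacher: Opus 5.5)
Your proposal is correct and follows the same route as the paper's proof. You monotonize via \cref{claim:monotonization} when $f$ is only Borel measurable, discretize via \cref{claim:discretization} with $r$ chosen so that $n/r\le\eps/2$, and then apply \cref{thm: corrupt finite alphabet} with error $\eps/2$. Your choice $r=\lceil 2n/\eps\rceil+1$ instead of the paper's $\max\{2,\lceil 2n/\eps\rceil\}$, and your remark about small $n$, are harmless variations.
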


\begin{proof}
If $f$ is non-monotone (and so by assumption Borel measurable), then we apply \cref{claim:monotonization} and without loss of generality assume that $f$ is a monotone function.
Then, we apply \cref{claim:discretization} with $ r = \max\{2, \lceil 2n/\eps \rceil \} $ to obtain a function $g: [r]^n\to \zo$ such that if there exists $B\subseteq [n]$ and $b\in \zo$ such that $\P^b_B(g) \ge 1-\frac{\eps}{2}$, then $\P^b_B(f) \ge 1-\eps$.
We finally apply \cref{thm: corrupt finite alphabet} to $g$ with error parameter $\frac{\eps}{2}$ and find such $b, B$ to obtain the desired result.
\end{proof}

To prove \cref{thm: corrupt finite alphabet}, we will need the following result that shows that if total influence towards $0$ of a function is small, then there exists a small coalition, biasing a function towards $1$.
\begin{lemma}
\label{lem: small 0 influence implies small coalition bias 1}
For all constant $\eps \in (0, 1)$, there exists a constant $C = C(\eps) \ge 1$ such that the following holds.
For all $n, r \in \N$, $\tau > 0$ and a monotone function $f: [r]^n \to \zo$ such that $\E_{x\sim \U_n}[f(x)] \ge \eps$ and $\J^0(f) \le \tau$, there exists $T\subseteq [n]$ with $|T| \le \exp(C\cdot \tau^2)$ so that
\[
    \P_T^1(f) \ge 1 - \eps.
\]
\end{lemma}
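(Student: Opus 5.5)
We reduce the statement to the $p$-biased cube and then invoke Hatami's structure theorem, in the form recorded later as \cref{thm: hatami pseudo junta}. That form says a monotone $F:\zo^N\to\zo$ with $\E_{\mu_p}[F]\ge\eps'$ and $\I^p(F)\le\tau'$ has a set of $\exp(O(\tau'^2))$ bits whose fixing to $1$ makes $F=1$ with probability at least $1-\eps'$.

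\textbf{Step 1: the encoding.} Fix a large integer $m$ and a small $p$. For each coordinate $i\in[n]$, introduce bits $z_{i,t,k}$ for $t\in\{2,\dots,r\}$ and $k\in[m_t]$, with the multiplicities $m_t$ chosen as follows. Define $E(z)_i=\max\{t: z_{i,t,k}=1 \text{ for some } k\}$, and set $E(z)_i=1$ if no bit is $1$. Then $\Pr[E(z)_i\le t]=(1-p)^{\sum_{s>t}m_s}$. We choose the $m_s$ (taking $p\to0$ with $p\sum_{s>t}m_s\approx\ln(r/t)$) so that this probability approximates $t/r$ within an arbitrary $\eta$. By \cref{claim: data processing expectation} and a hybrid over coordinates, $E(\mu_p)$ is then within statistical distance $n\eta$ of $\U_n$. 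Set $F=f\circ E$. The map $E$ is monotone in $z$, so $F$ is monotone and $\E_{\mu_p}[F]\ge \eps-n\eta$.

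\textbf{Step 2: bounding $\I^p(F)$.} Fix all coordinates of $x$ except $i$. The restriction of $f$ is constant or a threshold $[x_i\ge t_0]$ by monotonicity, so the restriction of $F$ is constant or an OR of the bits $z_{i,s,k}$ with $s\ge t_0$. For a monotone function on the $p$-biased cube,
\[
\I^p_{j}(F)=2p(1-p)\bigl(\E[F\mid z_j=1]-\E[F\mid z_j=0]\bigr).
\]
For an OR of $M$ bits, summing over the bits gives total influence $2Mp(1-p)(1-p)^{M-1}\le 2(1-p)\cdot(-\ln(1-p)^M)\cdot(\text{value }0\text{ probability})\approx 2\Pr[\mathrm{OR}=0]\ln(1/\Pr[\mathrm{OR}=0])$. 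This is too weak as stated, so the comparison should be made instead with $\J^0_i$.

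The quantity $\J^0_i(f)$ equals the probability, over $x_{-i}$, that the restriction is a nonconstant threshold and $x_i$ lands in the $1$-region. So we need the bound: for an OR of $M$ $p$-biased bits, the total influence is at most $2\Pr[\mathrm{OR}=1]$ up to $O(p)$ error. Indeed, $Mp(1-p)^{M}\le 1-(1-p)^M$, since $Mp(1-p)^M\le (1-p)^M\bigl((1-p)^{-M}-1\bigr)$ holds by $(1-p)^{-M}\ge 1+Mp$. This gives the per-fiber bound $\I^p_{\text{block }i}(F\mid \cdot)\le 2\Pr[\text{restriction}=1]+O(p)$. Averaging over $z_{-i}$, which is itself close to uniform on the other coordinates, yields
\[
\I^p(F)\le 2\J^0(f)+o(1)\le 2\tau+1
\]
after choosing $\eta,p$ small relative to $n$.

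\textbf{Step 3: pulling back.} Apply \cref{thm: hatami pseudo junta} to $F$ with error $\eps/2$, expectation bound $\eps/2$, and influence bound $2\tau+1$. This gives a set $T'$ of $\exp(O(\tau^2))$ bits (absorbing the $+1$ into $C$ for $\tau$ bounded below; for tiny $\tau$ we still get a constant-size set) such that setting them to $1$ forces $F=1$ with probability at least $1-\eps/2$. Let $T$ be the set of coordinates owning a bit of $T'$, so $|T|\le|T'|$. Setting $x_i=r$ for $i\in T$ dominates any setting of the bits, so by monotonicity $\P^1_T(F\text{-encoding})\ge 1-\eps/2$. Transferring back to the uniform distribution costs at most $n\eta$, giving $\P^1_T(f)\ge1-\eps$.

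\textbf{Main obstacle.} The delicate step is Step 2: matching the per-fiber OR influence to $\J^0_i$ with constant $2$ and no dependence on $r$, while handling the averaging against the approximate, rather than exact, uniform distribution. I would take $\eta$ and $p$ as functions of $n,r$ to make every error negligible.
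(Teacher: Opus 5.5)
Your argument follows the paper's proof essentially step for step. Your encoding $E$ is the paper's max-selector $\phi_a$. A nonconstant threshold fiber becomes an OR of the bits with label at least $t_0$. The bound $2Mp(1-p)^M\le 2\bigl(1-(1-p)^M\bigr)=2\Pr[\mathrm{OR}=1]$ gives $\I^p(F)\le 2\J^0(f,\A)$ exactly, so your $+O(p)$ hedge is unnecessary; the paper reaches the same inequality via $2(1-p)\Pr[\text{exactly one bit is }1]\le 2\Pr[h=1]$. The transfer to $\J^0(f)$, the application of Hatami's theorem, and the pullback to the blocks meeting $T'$ are also as in the paper.

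The one step that fails as written is the additive $+1$ in $\I^p(F)\le 2\tau+1$. The lemma must hold for every $\tau>0$ with $C$ depending only on $\eps$, and $\exp(C\tau^2)\to1$ as $\tau\to0$. As soon as $\exp(C\tau^2)<2$, the statement demands $|T|\le 1$. Hatami's theorem fed with influence bound $2\tau+1$ only guarantees $|T'|\le\exp(C_H(2\tau+1)^2)$, and that bound is at least $e^{C_H}$. So a ``constant-size set'' does not meet the required bound, and the $+1$ cannot be absorbed into $C$ in this regime.

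The repair is immediate, because $p$ and $\eta$ are chosen after $\tau$ is fixed: make the error proportional to $\tau$ rather than $O(1)$. For monotone $f$ you have $\J^0_i(f,\sigma)=\Pr_{x\sim\sigma^n}[f(1,x_{-i})=0,\ f(x)=1]$, so $\J^0(f,\A)\le\J^0(f)+n\,|\A^n-\U_n|$. Requiring $n\,|\A^n-\U_n|\le\tau/2$ (besides $|\A^n-\U_n|\le\eps/2$) yields $\I^p(F)\le 3\tau$, and hence $|T'|\le\exp(9C_H\tau^2)$. This is exactly the role of the paper's choice $p=\min\bigl(\eps/(2rn),\,\tau/(2rn^2)\bigr)$, which gives $\I^p(g)\le 4\tau$.

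There is a smaller slip of the same type. Approximating the CDF of $E(z)_i$ within $\eta$ only bounds each point mass within $2\eta$. The per-coordinate statistical distance is therefore at most $r\eta$, not $\eta$, and your $n\eta$ should read $rn\eta$. Since you let $\eta$ depend on $n$ and $r$, this costs nothing.
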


We prove this key lemma in \cref{subsec: high inf or small junta other direction}. We first show how it implies  \cref{thm: corrupt finite alphabet}.

\begin{proof}[Proof of \cref{thm: corrupt finite alphabet}]
We apply \cref{claim:monotonization} to assume without loss of generality that $f$ is monotone. We will first show how to find such $b$ and $B$, and then analyze and show that it has the claimed properties. 

\paragraph{Finding such $b$ and $B$}
Let $\delta = \frac{1}{100\cdot C_{\cref{lem: small 0 influence implies small coalition bias 1}}}\cdot \frac{\sqrt{\log(n)}}{n}$ where $C_{\cref{lem: small 0 influence implies small coalition bias 1}} \ge 1$ is the constant in \cref{lem: small 0 influence implies small coalition bias 1} for the parameter $\eps$.

We consider the following algorithm to find such a $B$.
\begin{itemize}
\item 
Let $Z_0 = \emptyset$ and let $f_0 = f$.

\item
For $i\ge 0$, proceed as follows.

\begin{itemize}
\item 
If $\Pr[f_i(x) = 0]\ge 1-\eps$, then let $b = 0$, $B = Z_i$ and terminate.

\item 
Otherwise, if there exists $j\in [n]\setminus Z_i$ such that $\J_j^0(f_i) \ge \delta$, then define $Z_{i+1} = Z_i\cup \{j\}$, and $f_{i+1}(x): [r]^{n - i - 1}\to \zo$ as the function obtained from fixing variable $j$ in $f_i$ to $1$, and proceed to the next iteration.

\item
Otherwise, if for all $j\in [n]\setminus Z_i$, $\J_j^0(f_i) < \delta$, then set $b = 1$, and apply \cref{lem: small 0 influence implies small coalition bias 1} (with $\tau = \delta n$) to $f_i$ to obtain the desired coalition $B$.
\end{itemize}
\end{itemize}

\paragraph{Analysis}
We now analyze the above algorithm.

As $\J^0_j(f_i) \ge \delta$, we easily see that $\Pr[f_{i+1} = 0] \ge \Pr[f_i = 0] + \delta$.
This shows that our algorithm must terminate in finite steps.
If our algorithm terminates by setting $b = 0$, then we will have that $\abs{B}\le \frac{1}{\delta} = O\left(\frac{n}{\sqrt{\log(n)}}\right)$ as desired.

Otherwise, our algorithm terminates by setting $b = 1$ in iteration $i$ for some $i\ge 0$. 
In that case, we have that $\abs{B}\le \exp(C_{\cref{lem: small 0 influence implies small coalition bias 1}}\cdot (\delta\cdot n)^2) = o(n / \sqrt{\log(n)})$, by our choice of constants in $\delta$.
To argue about correctness, by monotonicity of $f$, we see that
\[
\P_B^1(f) = \Pr_x[f(x_{-B}, r^B)=1] \ge \Pr_x[f(x_{-B-Z_i},r^B,1^i)=1] \ge 1 - \eps.
\]
\end{proof}

\subsection{Proving the direction switching lemma}
\label{subsec: high inf or small junta other direction}

In this subsection, we will prove \cref{lem: small 0 influence implies small coalition bias 1}, our result that if total influence of a function towards $0$ is small, then small coalition can bias it towards $1$.

To prove this, we will need the following two ingredients.
First, we will need the following result of Hatami regarding functions with small $p$-biased influence.

\begin{theorem}[Simplified version of Corollary 2.10 from \cite{hatami12structure}]
\label{thm: hatami pseudo junta}
For all constants $\eps \in (0, 1)$, there exists a constant $C = C(\eps) \ge 1$ such that the following holds.
For all $n\in \N, 0 < p \le \frac{1}{2}$, and a monotone function $f: \zo^n \to \zo$ with $\E_{x\sim \mu_p} [f(x)] \ge \eps$, there exists $S\subseteq [n]$ with $|S| \le \exp\left(C\cdot \left(\I^p(f)\right)^2\right)$ such that 
\[
    \P_S^1(f, \mu_p) \ge 1 - \eps.
\]
\end{theorem}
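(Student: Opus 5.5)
Since this is cited from \cite{hatami12structure}, my plan is to specialize Hatami's result rather than reprove the structure theory. The first step is to restate the target in convenient form. Because $f$ is monotone, the coalition $S$ forces $1$ exactly when setting $x_S=1^S$ gives output $1$. So
\[
\P^1_S(f,\mu_p)=\Pr_{x\sim\mu_p}[f(1^S,x_{-S})=1],
\]
and it suffices to find $S$ with $|S|\le\exp(C\tau^2)$, where $\tau=\I^p(f)$, such that the restriction $f|_{x_S=1}$ has $\mu_p$-expectation at least $1-\eps$.

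Next I would invoke Hatami's main structure theorem. It states that every $f$ with $\I^p(f)\le\tau$ is $\eta$-close under $\mu_p$ to an $\exp(O(\tau^2/\eta^{2}))$-pseudo-junta $g$. Such a $g$ comes with a map $x\mapsto J(x)$, where $|J(x)|\le k$, the value of $g(x)$ depends only on $x_{J(x)}$, and $J(x)$ is itself determined by $x_{J(x)}$. I take $\eta$ small compared with $\eps$, so that $k=\exp(C(\eps)\tau^2)$.

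The deduction I have in mind goes as follows.
\begin{itemize}
\item Since $\E_{\mu_p}[f]\ge\eps$, the set of $x$ with $g(x)=1$ and $f(x)=1$ has measure at least $\eps-\eta$.
\item For such an $x$, let $T(x)$ be the set of coordinates in $J(x)$ that are set to $1$. Then $|T(x)|\le k$.
\item By monotonicity of $f$, it is natural to try $S=T(x)$. Raising the coordinates of $T(x)$ can only help, and the pseudo-junta property says that $g$ is $1$ on the whole subcube of points agreeing with $x$ on $J(x)$.
\end{itemize}
This shows that $f|_{x_S=1}$ is large only on average over $x$, and only relative to the conditional measure. So I would average over $x$ and pick a good $S$. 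I would then upgrade the bound from ``expectation at least $\Omega(\eps)$'' to ``expectation at least $1-\eps$''. The tool for this is that restricting to $x_S=1$ does not increase total influence, so the same argument can be iterated on $f|_{x_S=1}$. Each iteration raises the expectation by a constant factor of its distance from $1$, so boundedly many iterations (depending on $\eps$) suffice. This only multiplies the size bound by a constant factor in the exponent.

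The main obstacle is the transfer step from closeness under $\mu_p$ to a statement on the conditioned subcube $\{x_S=1\}$. When $p$ is tiny, this subcube has measure $p^{|S|}$, which is far smaller than $\eta$. Closeness of $f$ and $g$ in $\mu_p$ therefore says nothing directly about that subcube. My first attempt would be to use the specific form of Hatami's Corollary 2.10, which is phrased for monotone functions and is in terms of such restrictions. It gives directly that for a typical $x$ in the support of $f$, the conditional distribution on $x_{-J(x)}$ keeps $f$ close to $g$. The point is that $J(x)$ is determined by $x_{J(x)}$, so conditioning on $x_{J(x)}$ leaves the remaining coordinates $\mu_p$-distributed. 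Combining that with monotonicity gives the bound with the stated $|S|\le\exp(C(\I^p(f))^2)$ for all $0<p\le\tfrac12$. Since this is exactly the content of \cite[Corollary~2.10]{hatami12structure}, I would cite it for this step rather than redo Hatami's analysis.
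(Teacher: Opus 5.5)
Your bottom line, citing Hatami's Corollary~2.10, is exactly what the paper does. The paper gives no proof of this statement and simply imports it in simplified form. Your monotone reformulation $\P^1_S(f,\mu_p)=\Pr_{x\sim\mu_p}[f(1^S,x_{-S})=1]$ is also fine.

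The self-contained derivation you sketch around the citation, however, rests on a false step that you should drop: restricting to $x_S=1$ can \emph{increase} total $p$-biased influence. Only the averaged statement is true, namely $\E_{x_S\sim\mu_p}[\I^p(f|_{x_S})]=\sum_{i\notin S}\I^p_i(f)$. The particular restriction $x_S=1^S$ carries weight $p^{|S|}$ in that average, so $\I^p_i(f|_{x_S=1})$ can be as large as $p^{-|S|}\I^p_i(f)$. For example, $x_1\wedge x_2$ has $\I^p=4p^2(1-p)$, while its restriction $x_2$ has $\I^p=2p(1-p)$. The same happens under the hypotheses of the statement. Take $f=(x_1\wedge T)\vee O$, where $T$ is a threshold on about $p^{-3}$ fresh variables at level about $p^{-2}$ (so $\I^p(T)=\Theta(1/p)$), and $O$ is an OR of enough further variables that $\E_{\mu_p}[O]\ge\eps$. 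Then $f$ is monotone with $\E_{\mu_p}[f]\ge\eps$ and $\I^p(f)=O_\eps(1)$, yet $\I^p(f|_{x_1=1})=\Omega_\eps(1/p)$. So the iteration ``with the same $\tau$'' breaks, and the claimed constant-factor gain per round was never justified either.

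The iteration is also unnecessary, and the obstacle you describe is not really one. Its resolution is the observation in your last paragraph, which follows directly from the pseudo-junta definition rather than from any special form of the corollary. Since $J(x)$ is determined by $x_{J(x)}$, the cells $C_x=\{y: y_{J(x)}=x_{J(x)}\}$ partition $\zo^n$ into subcubes. On each cell $g$ is constant and the free coordinates are $\mu_p$-distributed, so $\Pr_{\mu_p}[f\ne g]=\sum_C\mu_p(C)\Pr_{\mu_p}[f\ne g\mid C]$, no matter how small the individual cells are.

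Now take $\eta=\eps^2/2$. The cells with $g\equiv 1$ have total measure at least $\eps-\eta$, so one of them satisfies $\Pr_{\mu_p}[f=0\mid C]\le\eta/(\eps-\eta)\le\eps$. Let $S$ be the coordinates of $J(x)$ equal to $1$ on $C$. Monotonicity (raising the coordinates of $J(x)\setminus S$ from $0$) gives $\P^1_S(f,\mu_p)\ge\Pr_{\mu_p}[f=1\mid C]\ge 1-\eps$ with $|S|\le k$, in one shot.

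What remains is quantitative. This argument yields $|S|\le\exp(C(\eps)\,\I^p(f)^2)$ only if the pseudo-junta size bound you invoke is $\exp(O_\eps(\tau^2))$. That quadratic exponent is what the paper needs downstream, since it applies the statement with $\I^p=O(\sqrt{\log n})$. Citing Corollary~2.10 directly, as the paper does, avoids having to check this.
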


Second, we need the following result that relates $p$-biased influence of a function to bias towards $0$ of the original function.
\begin{lemma}
\label{lem: approx product p biased}
For $r, n\in \N$, let $f: [r]^n \to \zo$ be a monotone function.
Then, for all $\eps> 0$, $p \in (0,1/2)$ there exists $m\in \N$ and a function $g: \zo^m \to\zo$ with the following properties.
\begin{enumerate}
\item 
$g$ is monotone.

\item
$\abs{\E_{y\sim \mu_p}[g(y)] - \E_{x\sim \U_n}[f(x)]} \le prn$.

\item
$\I^p(g) \le 2\cdot \J^0(f) + 4pr\cdot n^2$.

\item
For all $S\subseteq [m]$, if
\[
    \P_S^1(g, \mu_p) \ge 1 - \eps,
\]
then there exists $T\subseteq [n]$ with $|T|\le |S|$ such that
\[
    \P_T^1(f) \ge 1 - \eps - prn.
\]
\end{enumerate}
\end{lemma}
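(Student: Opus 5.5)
The plan is to encode each coordinate of $[r]$ by a block of independent $p$-biased bits, with each bit carrying a label in $\{2,\dots,r\}$. The decoded value of a block is the largest label among its bits that equal $1$, or $1$ if no bit equals $1$. We then set $g=f\circ \mathrm{Dec}$, where $\mathrm{Dec}$ decodes all $n$ blocks.

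\textbf{The encoding.} For $v\in\{1,\dots,r-1\}$ let $N_v$ be the number of bits in a block with label $>v$, and set $N_r=0$. Under $\mu_p$, the decoded value $X_i$ of block $i$ satisfies $\Pr[X_i\le v]=(1-p)^{N_v}$. We choose the integers $N_v$, nonincreasing in $v$, so that $(1-p)^{N_v}$ is as close as possible to $v/r$. Consecutive powers of $(1-p)$ differ by at most $p$, so the error at each level is at most $p$. The label counts are $t_v=N_{v-1}-N_v\ge 0$ for $v\ge 2$, and $m=nN_1$.

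The CDF of $X_i$ is within $p$ of the uniform CDF at every point. Hence each point probability is within $2p$ of $1/r$. This should give statistical distance at most $pr$ between $X_i$ and uniform on $[r]$; it suffices to bound the sum of the CDF errors, which is at most $pr$. A hybrid argument over the $n$ blocks, together with \cref{claim: data processing expectation}, then shows that the distribution $\mathrm{Dec}(\mu_p)$ is within $prn$ of $\U_n$.

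\textbf{Items 1 and 2.} Monotonicity of $g$ is immediate, since $\mathrm{Dec}$ is coordinatewise monotone and $f$ is monotone. Item 2 follows from the statistical-distance bound and \cref{claim: data processing expectation}.

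\textbf{Item 3, the main step.} Fix a block $i$ and the bits of all other blocks, which determines $x_{-i}=\mathrm{Dec}(y)_{-i}$. Since $f$ is monotone, the restriction $t\mapsto f(x_{-i},t)$ is either constant or a threshold $\mathbf{1}[t\ge\theta]$ with $\theta\ge 2$. If it is constant, the bits of block $i$ have influence zero. If it is a threshold, then $g$ restricted to block $i$ is the OR of the $N=N_{\theta-1}$ bits with label $\ge\theta$.

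Write $a=1-p$. Each of those $N$ bits has $p$-biased influence $2p(1-p)\cdot a^{N-1}=2pa^N$, so their total is $2pNa^N$. On the other hand,
\[
1-a^N=(1-a)(1+a+\cdots+a^{N-1})\ge pNa^{N-1}\ge pNa^N .
\]
So the total influence of block $i$ at this $x_{-i}$ is at most $2\Pr[X_i\ge\theta]$. This probability equals the local contribution to $\J^0_i$ at $x_{-i}$: setting coordinate $i$ to $1$ gives output $0$, and $\Pr[f=1\mid x_{-i}]=\Pr[X_i\ge\theta]$.

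Averaging over the other blocks gives $\sum_{\text{bits in block }i}\I^p\le 2\,\J^0_i(f,\mathrm{Dec}(\mu_p))$. Here $\J^0_i$ is the influence computed under the product of the decoded marginals; it is a difference of two probabilities of events in $x$. Each of these probabilities moves by at most $prn$ when we pass to $\U_n$. Therefore $\J^0_i(f,\mathrm{Dec}(\mu_p))\le \J^0_i(f)+2prn$, and summing over $i$ gives $\I^p(g)\le 2\J^0(f)+4prn^2$.

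\textbf{Item 4.} Let $T$ be the set of blocks that meet $S$, so $|T|\le|S|$. Setting the bits of $S$ to $1$ makes each decoded coordinate in $T$ at most $r$. By monotonicity,
\[
g(1_S,y_{-S})\le f\bigl(r^T,\mathrm{Dec}(y)_{-T}\bigr).
\]
The right-hand side depends only on the decoded coordinates outside $T$, whose distribution is within $prn$ of uniform. Hence
\[
\P^1_T(f)\ge \P^1_S(g,\mu_p)-prn\ge 1-\eps-prn .
\]

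\textbf{Expected obstacle.} The delicate part is the bookkeeping in the encoding. The $N_v$ must be chosen monotone with per-level CDF error at most $p$, so that the total statistical distance per block comes out as at most $pr$ and not something larger. The influence inequality for the OR is short once the threshold structure is identified.
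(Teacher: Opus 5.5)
Your proposal is correct and follows essentially the same route as the paper. It uses the same max-label block encoding, with two small changes: you round $(1-p)^{N_v}$ to the nearest power where the paper takes the smallest power $\ge v/r$, and you omit the paper's dummy label-$1$ bit. It then uses the same OR bound $2pN(1-p)^N\le 2\Pr[X_i\ge\theta]$ to get $\I^p(g)\le 2\J^0(f,\A)$, followed by the same statistical-distance transfer to the uniform distribution and the same choice of $T$ as the blocks meeting $S$; the differences are cosmetic.
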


We will prove \cref{lem: approx product p biased} in \cref{sec:biased_space}.
We now show how we can combine these two ingredients and prove our direction switching lemma.

\begin{proof}
[Proof of \cref{lem: small 0 influence implies small coalition bias 1}]
We let 
\[
    p = \min\left(\frac{\eps}{2rn}, \frac{\tau}{2r\cdot n^2}\right)
\]
so that $prn \le \frac{\eps}{2}$ and $4prn\cdot ^2 \le 2\tau$.
With this, we apply \cref{lem: approx product p biased} to $f$ with parameters $\eps_{\cref{lem: approx product p biased}} = \frac{\eps}{2}$ to obtain a function $g: \zo^m \to \zo$ for some $m\in \N$ with the following properties:
\begin{enumerate}
\item 
$g$ is monotone.

\item
$\E_{y\sim \mu_p}[g(y)] \ge \eps - prn \ge \frac{\eps}{2}$.

\item
$\I^p(g) \le 2\tau + 4pr\cdot n^2 \le 4\tau$.

\item
For all $S\subseteq [m]$, if $\P_S^1(g, \mu_p) \ge 1 - \frac{\eps}{2}$, then there exists $T\subseteq [n]$ with $|T|\le |S|$ such that
\[
    \P^1_T(f) \ge 1 - \eps.
\]
\end{enumerate}

With this, we apply \cref{thm: hatami pseudo junta} with parameter $\eps_{\cref{thm: hatami pseudo junta}} = \frac{\eps}{2}$ to $g$ to find a $C = C(\eps) \ge 1$ and a coalition $S\subseteq [m]$ with $\abs{S} \le \exp(C\cdot \tau^2)$ such that
$\P_S^1(g, \mu_p) \ge 1 - \frac{\eps}{2}$.
By property 4 of $g$ above, the desired claim follows.
\end{proof}

\section{From functions in product space to \texorpdfstring{$p$-biased}{p-biased} space}\label{sec:biased_space}

We use this section to prove \cref{lem: approx product p biased}.
The main idea is to encode each coordinate so that every nonconstant
threshold of its encoded value is an OR of some of its bits.
Since fixing all other coordinates of a monotone function leaves a
threshold or a constant function, this will let us bound the
$p$-biased influence of the encoded function by the influence of $f$
towards $0$. We first prove this influence bound, then choose the
encoding to approximate the uniform distribution, and finally
complete the proof of \cref{lem: approx product p biased} at the end of \cref{subsec: helper claims}.
Throughout this section, fix a monotone function $f:[r]^n\to\zo$
and $p\in(0,1/2)$.

\subsection{The encoding and its influence bound}
\label{subsec:relating_influence_zero}

We assign each bit a value in $[r]$ and output the largest value
assigned to a bit that equals $1$, or $1$ if all bits are $0$.
Formally, we define the following max selector function.

\begin{definition}
For $\ell\in \N$ and a nondecreasing sequence $a = (a_1, \dots, a_{\ell}) \in [r]^{\ell}$, we define $\phi_a:\zo^{\ell} \to [r]$ as 
\[
    \phi_a(y) =
    \begin{cases}
    a_{\max\{i\in[\ell]:y_i=1\}} & y\ne 0^{\ell} \\
    1 & y = 0^{\ell}
    \end{cases}
\]
\end{definition}

The useful property is that, for every $t>1$,
\[
    \phi_a(y)\ge t
    \quad\Longleftrightarrow\quad
    \bigvee_{i\in[\ell]:\,a_i\ge t} y_i=1.
\]
Thus, every nonconstant threshold on $[r]$ becomes an OR of some
of the encoding bits. This property holds for every choice of $a$.
We will choose $a$ to approximate the uniform distribution in
\cref{subsec: exists seq s.t. phi approx p biased}.

For now, fix any nondecreasing sequence $a\in[r]^\ell$.
Set $m=\ell\cdot n$ and define $g:\zo^m\to\zo$ as
\[
    g(y)=f(\phi_a(y_1,\dots,y_\ell),\dots,
    \phi_a(y_{m-\ell+1},\dots,y_m)).
    \numberthis\label{eq: def g}
\]
Let $\A$ be the distribution on $[r]$ obtained by sampling
$y\sim\mu_p^\ell$ and outputting $\phi_a(y)$.
The encoded blocks are independent, so their joint distribution
is $\A^n$.

We now prove the  influence bound.

\begin{claim} We have
\label{claim: Inf p g at most 2 inf f A}
\[
    \I^p(g) \le 2\cdot \J^0(f, \A).
\]
\end{claim}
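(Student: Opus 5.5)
The plan is to split $\I^p(g)$ into contributions from individual encoding bits, grouped by block. Then, for each block $i\in[n]$, I compare the total contribution of its $\ell$ bits with $\J^0_i(f,\A)$ pointwise in the values of the other blocks. Summing over $i$ then gives the claim.

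Fix a block $i$ and a bit position $k\in[\ell]$ inside it. In the definition of $\I^p_{(i,k)}(g)$, the bits outside block $i$ are shared by $x$ and $y$. Their encoded values $z_{-i}=(\phi_a(\cdot))_{j\ne i}$ are therefore a common sample from $\A^{n-1}$, independent of the bits in block $i$. Conditioning on $z_{-i}$, let $h(t)=f(t,z_{-i})$ for $t\in[r]$. By monotonicity of $f$, $h$ is either constant or a threshold $h(t)=\mathbbm{1}[t\ge\theta]$ for some $\theta\in\{2,\dots,r\}$.

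In the constant case, no bit of block $i$ influences $g$. The contribution to $\J^0_i$ is also $0$: the coalition cannot create a $0$ if $h\equiv 1$, and $\P^0=\Pr[h=0]=1$ if $h\equiv 0$.

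In the threshold case, the displayed property of $\phi_a$ says that $h(\phi_a(w))=\bigvee_{k\in Q}w_k$ with $Q=\{k:a_k\ge\theta\}$, where $w\sim\mu_p^\ell$ are the block's bits. Write $q=|Q|$.

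1. Resampling a bit $k\notin Q$ never changes the output.
2. Resampling a bit $k\in Q$ changes the OR exactly when the other $q-1$ bits of $Q$ are $0$ and the old and new values of bit $k$ differ. This has probability $(1-p)^{q-1}\cdot 2p(1-p)$.
3. Summing over $k\in Q$, the conditional contribution of block $i$ to $\I^p(g)$ is $2pq(1-p)^q$.

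On the other side, the coalition $\{i\}$ can always force $0$ by choosing $t=1<\theta$. Hence the conditional contribution to $\J^0_i(f,\A)$ is $1-\Pr[h=0]=\Pr[h=1]=1-(1-p)^q$.

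The one inequality to verify is
\[
pq(1-p)^q\le 1-(1-p)^q .
\]
It follows from the geometric sum
\[
1-(1-p)^q=p\sum_{j=0}^{q-1}(1-p)^j\ge pq(1-p)^{q-1}\ge pq(1-p)^q .
\]
So, pointwise in $z_{-i}$, the contribution of block $i$ to $\I^p(g)$ is at most twice its contribution to $\J^0_i(f,\A)$.

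Taking expectation over $z_{-i}\sim\A^{n-1}$ gives $\sum_{k}\I^p_{(i,k)}(g)\le 2\J^0_i(f,\A)$. Summing over $i\in[n]$ yields $\I^p(g)\le 2\J^0(f,\A)$.

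The only step needing care, and the main potential obstacle, is the conditioning. The bits outside block $i$ are common to both samples in the definition of $\I^p_{(i,k)}$, so their encoded values really are distributed as $\A^{n-1}$, independent of block $i$. Likewise, the encoded value of block $i$ itself is distributed as $\A$, so that $\Pr[h=1]$ computed over the bits matches the quantity appearing in $\J^0_i(f,\A)$.
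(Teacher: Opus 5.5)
Your proof is correct and follows essentially the same route as the paper: condition block by block on the other blocks' encoded values, dispose of the constant case, and in the threshold case compare the total influence $2pq(1-p)^q$ of the resulting OR of $q$ bits with $\Pr[h=1]=1-(1-p)^q$. The only cosmetic difference is in the final elementary inequality: the paper interprets $qp(1-p)^{q-1}$ as the probability that exactly one of the $q$ bits is $1$, whereas you use the geometric-sum identity.
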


\begin{proof}
For $j\in[n]$, let $B_j=\{(j-1)\cdot\ell+1,\dots,j\cdot\ell\}$.
It suffices to show that
\[
    \sum_{i\in B_j}\I_i^p(g)\le 2\cdot\J_j^0(f,\A)
\]
for every $j\in[n]$.
Fix such a $j$ and condition on all blocks except $B_j$.
Let $x_{-j}$ denote their encoded values, and let
$h(y_{B_j})=f(\phi_a(y_{B_j}),x_{-j})$ be the remaining function,
where the encoded value is placed in coordinate $j$.

If fixing $x_{-j}$ makes $f$ constant, then $h$ is constant as well,
and both its total $p$-biased influence and the conditional
influence of coordinate $j$ towards $0$ are zero.

Otherwise, by monotonicity, there is a value $t^*>1$ such that
$f(z,x_{-j})=1$ if and only if $z\ge t^*$.
In particular, $f(1,x_{-j})=0$.
The function $h$ is therefore an OR of
$k=|\{i\in[\ell]:a_i\ge t^*\}|$ bits.
Each of these $k$ bits changes the output 
exactly when the other $k-1$ bits are zero and the two independent samplings of the bit (from $\mu_p$)
give different values. Consequently,
\begin{align*}
    \I^p(h)
    &=k\cdot2p(1-p)\cdot(1-p)^{k-1}\\
    &=2(1-p)\Pr[\text{exactly one of these $k$ bits is $1$}]\\
    &\le 2\Pr[h=1] \\
    &=2\Pr_{x_j\sim\A}[f(x_j,x_{-j})=1] \\
    & = 2(\Pr_{x_j\sim\A}[f(x_j,x_{-j})=1] - f(1,x_{-j})),
\end{align*}
where the final equality use the fact that $x^{-j}$ does not make $f$ a constant, and so $f(1,x_{-j})$ must be $0$.

Recalling that  $x_{-j}$ follows the distribution $\A^{n-1}$, we get
\[
    \sum_{i\in B_j}\I_i^p(g)
    \le 2\E_{x_{-j}\sim\A^{n-1}}
    \big[\Pr_{x_j\sim\A}[f(x_j,x_{-j})=1]-f(1,x_{-j})\big]
    =2\J_j^0(f,\A).
 \]
The claim now follows by summing over $j\in[n]$.
\end{proof}

\subsection{Approximating the uniform distribution}
\label{subsec: exists seq s.t. phi approx p biased}

We next choose $a$ so that $\A$ approximates the uniform
distribution on $[r]$. It is helpful to note that
\[
    \Pr_{y\sim\mu_p^\ell}[\phi_a(y)\le i]
    =(1-p)^{|\{j\in[\ell]:a_j>i\}|}.
\]
This just follows by definition of $\phi_a$ since the encoded value is at most $i$ exactly when every bit
assigned a value greater than $i$ is zero.
We choose the number of such bits to make this probability close
to $i/r$. Taking  differences for consecutive values of $i$ gives the following.

\begin{lemma}
\label{lem: exists seq s.t. phi approx p biased}
There exists $\ell\in \N$ and a nondecreasing sequence $a = (a_1, \dots, a_{\ell}) \in [r]^{\ell}$ such that for all $i\in [r]$, we have that
\[
    \abs{\Pr_{y\sim \mu_p}[\phi_a(y) = i] - \frac{1}{r}} \le 2p
\]
\end{lemma}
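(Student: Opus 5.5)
We will build the sequence $a$ by specifying, for each $i\in\{0,1,\dots,r\}$, how many bits receive a value greater than $i$. Write $N_i=|\{j\in[\ell]:a_j>i\}|$. Since $a$ is nondecreasing with entries in $[r]$, the numbers $N_i$ determine $a$ completely: $N_r=0$, $N_0=\ell$, and exactly $N_{i-1}-N_i$ bits get the value $i$. Any nonincreasing choice $N_1\ge N_2\ge\dots\ge N_r=0$ is realizable, by placing $N_{i-1}-N_i$ copies of $i$ in increasing order of $i$. Set $\ell=N_1$, so that no bit is assigned the value $1$. This is harmless, because the value $1$ also arises when all bits are zero.

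Using the identity $\Pr[\phi_a(y)\le i]=(1-p)^{N_i}$ noted above, we choose $N_i$ so that $(1-p)^{N_i}$ is as close to $i/r$ as possible. Concretely, for $1\le i\le r-1$ let $N_i$ be the smallest nonnegative integer with $(1-p)^{N_i}\le i/r$, and set $N_r=0$. Since $i/r<1$ and $(1-p)^N\to 0$, such an $N_i$ exists and is finite. Because $i/r$ increases in $i$, the $N_i$ are nonincreasing, so the sequence $a$ is well defined.

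Minimality of $N_i$ gives $(1-p)^{N_i-1}>i/r$ whenever $N_i\ge1$. When $N_i=0$ the bound is immediate. Hence
\[
  \frac{i}{r}\ge (1-p)^{N_i}>(1-p)\frac{i}{r}\ge \frac{i}{r}-p .
\]
So the cumulative distribution function $F(i)=\Pr[\phi_a(y)\le i]$ satisfies $|F(i)-i/r|\le p$ for all $i\in[r]$, with equality $F(r)=1$. Setting $F(0)=0$, we have
\[
\Pr[\phi_a(y)=i]=F(i)-F(i-1),
\]
and by the triangle inequality this differs from $1/r$ by at most $p+p=2p$. For $i=1$ the error is in fact only $p$.

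The only step needing care is checking that the chosen counts are consistent: they must be monotone, finite, and give a genuine nondecreasing sequence in $[r]^\ell$. The rounding estimate itself is routine, since the multiplicative gap between consecutive powers of $(1-p)$ translates into an additive error of at most $p\cdot i/r\le p$.
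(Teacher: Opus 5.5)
Your proof is correct and is essentially the paper's argument: choose the number $N_i$ of bits with value exceeding $i$ so that $(1-p)^{N_i}\approx i/r$, then bound each point probability as a difference of consecutive values of the CDF $F$. The main difference is the rounding direction. You take the smallest $N_i$ with $(1-p)^{N_i}\le i/r$, which gives the clean additive bound $|F(i)-i/r|\le p$. The paper instead takes the largest $\chi(i)$ with $(1-p)^{\chi(i)}\ge i/r$ and uses $(1-p)^{\chi(i)}\le \frac{i}{r}(1+2p)$. The paper also always includes one entry equal to $1$, which avoids your degenerate $\ell=0$ when $r=1$.
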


\begin{proof}
For $i\in [r]$, let $\chi(i)$ be the unique nonnegative integer such that
\[
(1-p)^{\chi(i)+1} < \frac{i}{r} \le (1-p)^{\chi(i)}. \numberthis \label{eq: chi def}
\]
Let $a$ be the nondecreasing sequence containing one copy of $1$ and, for each $i\ge 2$, $\chi(i-1)-\chi(i)$ copies of $i$.
We observe that $\chi(r) = 0$ and so, for any $i\in [r]$, the number of terms in $a$ that have value $> i$ is $\chi(i)$.
Therefore, $\Pr_{y\sim \mu_p}[\phi_a(y) \le i] = (1-p)^{\chi(i)}$.

Thus, for $i \ge 2$, we have
\begin{align*}
\Pr_{y\sim \mu_p}[\phi_a(y) = i]
& = \Pr[\phi_a(y)\le i] - \Pr[\phi_a(y)\le i-1]\\
& = (1 - p)^{\chi(i)} - (1-p)^{\chi(i-1)} \numberthis  \label{eq: phi equals i}.
\end{align*}

To bound \cref{eq: phi equals i}, we will make use of the following claim, which is proved at the end of this subsection.
\begin{claim}
\label{claim: 1-p chi(i) tight bound} The following holds:
\[
   \frac{i}{r}\le (1 - p)^{\chi(i)} \le \frac{i}{r}(1+2p).
\]    
\end{claim}
First, consider the case of $i=1$.
Since $\phi_a$ takes values in $[r]$, we have
\[
    \Pr_{y\sim\mu_p}[\phi_a(y)=1]
    = (1-p)^{\chi(1)}.
\]
Applying \cref{claim: 1-p chi(i) tight bound}, we obtain
\[
    \frac1r
    \le \Pr_{y\sim\mu_p}[\phi_a(y)=1]
    \le \frac1r(1+2p)
    \le \frac1r+2p,
\]
yielding the desired bound for $i=1$.

For the remainder of the proof, fix $i\ge 2$. First, using \cref{eq: phi equals i}, we compute that
\begin{align*}
\Pr_{y\sim \mu_p}[\phi_a(y) = i] 
& = (1 - p)^{\chi(i)} - (1-p)^{\chi(i-1)}\\
& \le \frac{i}{r}(1+2p) - \frac{i-1}{r}\\
& \le \frac{1}{r} + 2p
\end{align*}
where for the middle inequality, we applied \cref{claim: 1-p chi(i) tight bound}.

Similarly, again using \cref{eq: phi equals i}, we have that
\begin{align*}
\Pr_{y\sim \mu_p}[\phi_a(y) = i] 
& = (1 - p)^{\chi(i)} - (1-p)^{\chi(i-1)}\\
& \ge \frac{i}{r} - \frac{i-1}{r}(1+2p)\\
& \ge \frac{1}{r} - 2p
\end{align*}
where for the middle inequality, we again applied \cref{claim: 1-p chi(i) tight bound}.

Hence, we conclude that
\[
\abs{\Pr_{y\sim \mu_p}[\phi_a(y) = i] - \frac{1}{r}} \le 2p
\]
as desired.
Lastly, we prove our remaining claim.
\begin{proof}[Proof of \cref{claim: 1-p chi(i) tight bound}]
By definition of $\chi(i)$ from \cref{eq: chi def}, our lower bound of $(1-p)^{\chi(i)} \ge \frac{i}{r}$ follows.

Similarly, from \cref{eq: chi def}, we compute our upper bound as
\begin{align*}
(1-p)^{\chi(i)}
& \le \frac{i}{r}\cdot \frac{1}{1-p}\\
& = \frac{i}{r}\cdot \left(1 + \frac{p}{1-p}\right)\\
& \le \frac{i}{r}\cdot \left(1 + 2p\right),
\end{align*}
where the last inequality uses the fact that $p \in (0,1/2)$.
\end{proof}
\end{proof}

\subsection{Completing the proof}
\label{subsec: helper claims}

Fix a sequence $a$ that satisfies the properties in \cref{lem: exists seq s.t. phi approx p biased}.
We first record the resulting bounds on the distance from uniform.

\begin{lemma}
\label{lem: A is close to uniform}
The following holds regarding $\A$.
\begin{enumerate}
\item 
\[
\norm{\A - \U_r}_{\infty} \le 2p.
\]
\item
For all $i\ge 1$,
\[
\abs{\A^i - \U_r^i} \le pri.
\]
\end{enumerate}
\end{lemma}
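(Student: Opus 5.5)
Part 1 is a direct restatement of \cref{lem: exists seq s.t. phi approx p biased}. Since we fixed $a$ to satisfy that lemma, for every $i\in[r]$ we have
\[
\abs{\Pr[\A=i]-\Pr[\U_r=i]}=\abs{\Pr_{y\sim\mu_p}[\phi_a(y)=i]-\tfrac1r}\le 2p .
\]
Taking the maximum over $i\in[r]$ gives $\norm{\A-\U_r}_\infty\le 2p$.

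For part 2 I would first handle the case $i=1$ and then extend to products by a hybrid argument. The statistical distance is half the $\ell_1$ distance, and the $\ell_1$ distance is at most $r$ times the $\ell_\infty$ distance. Hence
\[
\abs{\A-\U_r}=\tfrac12\sum_{w\in[r]}\abs{\Pr[\A=w]-\tfrac1r}\le \tfrac12\cdot r\cdot 2p=pr .
\]
This is the claim for $i=1$.

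For general $i$ I would use the hybrids $H_k=\A^{k}\times\U_r^{i-k}$ for $k=0,\dots,i$, so that $H_0=\U_r^i$ and $H_i=\A^i$. Consecutive hybrids $H_{k-1}$ and $H_k$ differ only in coordinate $k$. They are obtained by applying the same map to $\A$ or to $\U_r$: the map appends independent samples for the other coordinates. One can condition on those other coordinates, or apply \cref{fact: dpi} to the randomized map that takes $w$ to $(w,\text{independent rest})$. Either way, $\abs{H_{k-1}-H_k}\le\abs{\A-\U_r}\le pr$. Equivalently, one can use the identity $\abs{X\times Z-Y\times Z}=\abs{X-Y}$ for independent $Z$, which follows by factoring the sum. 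The triangle inequality for statistical distance, summed over the $i$ steps, then gives $\abs{\A^i-\U_r^i}\le pri$.

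There is no real obstacle here. The only point that needs care is the single-coordinate step in the hybrid argument. \cref{fact: dpi} as stated covers deterministic maps, so I would instead verify $\abs{X\times Z-Y\times Z}=\abs{X-Y}$ directly from the definition:
\[
\tfrac12\sum_{x,z}\Pr[Z=z]\,\abs{\Pr[X=x]-\Pr[Y=x]}=\abs{X-Y}.
\]
Combined with symmetry under permuting coordinates, this gives the hybrid step.
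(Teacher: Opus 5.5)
Your proof is correct and follows essentially the same route as the paper. Part 1 comes directly from \cref{lem: exists seq s.t. phi approx p biased}, the $i=1$ case uses the same $\ell_\infty$-to-statistical-distance bound, and your explicit hybrids $H_k=\A^k\times\U_r^{i-k}$ are just the unrolled form of the paper's induction on $i$, which rests on the same fact that taking a product with a common independent distribution preserves statistical distance.
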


\begin{proof}
[Proof of \cref{lem: A is close to uniform}]
The first part is immediate from \cref{lem: exists seq s.t. phi approx p biased}.
For the second part, we proceed by induction on $i$.
For $i=1$, by definition of statistical distance and the first part,
\[
    \abs{\A-\U_r}
    = \frac12\sum_{a\in[r]}
      \abs{\Pr[\A=a]-\frac1r}
    \le pr.
\]
For $i\ge 2$, applying the triangle inequality, we obtain
\begin{align*}
    \abs{\A^i-\U_r^i}
    &\le \abs{\A^{i-1}\times\A-\U_r^{i-1}\times\A}
       + \abs{\U_r^{i-1}\times\A-\U_r^{i-1}\times\U_r}\\
    &= \abs{\A^{i-1}-\U_r^{i-1}}+\abs{\A-\U_r}\\
    &\le pr(i-1)+pr
     = pri,
\end{align*}
where the equality holds because taking a product with the same
distribution preserves statistical distance, and the last inequality
follows from the induction hypothesis and the case $i=1$.
\end{proof}

We next relate the influence of $f$ towards $0$ under $\A$ to its influence under the uniform distribution.

\begin{claim}
\label{claim: Inf f uniform vs A} It holds that
\[
    \J^0(f, \A)\le \J^0(f) + 2pr\cdot n^2.
\]
\end{claim}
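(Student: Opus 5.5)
Because $f$ is monotone, the coalition $\{j\}$ forces $0$ exactly when setting $x_j=1$ gives $0$. Hence, for any product distribution $\sigma^n$ on $[r]^n$,
\[
    \J^0_j(f,\sigma)
    = \Pr_{x\sim\sigma^n}[f(1,x_{-j})=0]-\Pr_{x\sim\sigma^n}[f(x)=0]
    = \E_{x\sim\sigma^n}[f(x)]-\E_{x_{-j}\sim\sigma^{n-1}}[f(1,x_{-j})].
\]
The plan is to compare each of these two expectations under $\sigma=\A$ and under $\sigma=\U_r$ separately, show that each changes by at most $prn$, and sum over $j$.

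For the first term, $f$ is a Boolean function of $x\in[r]^n$. By \cref{lem: A is close to uniform} (part 2 with $i=n$) we have $\abs{\A^n-\U_r^n}\le prn$, so \cref{claim: data processing expectation} gives
\[
    \abs{\E_{\A^n}[f]-\E_{\U_n}[f]}\le prn.
\]
For the second term, $x_{-j}\mapsto f(1,x_{-j})$ is a Boolean function on $[r]^{n-1}$. Part 2 of \cref{lem: A is close to uniform} with $i=n-1$ gives statistical distance at most $pr(n-1)\le prn$, and \cref{claim: data processing expectation} again bounds the change in this expectation by $prn$.

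The triangle inequality then gives $\J^0_j(f,\A)\le \J^0_j(f)+2prn$ for each $j\in[n]$. Summing over the $n$ coordinates yields $\J^0(f,\A)\le\J^0(f)+2pr\cdot n^2$, which is the claim.

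There is no real obstacle here. The only points needing care are writing $\J^0_j$ as a difference of two expectations, which uses monotonicity to replace the existential quantifier by the fixed value $x_j=1$, and remembering that the restricted term depends only on $n-1$ coordinates.
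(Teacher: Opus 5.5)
Your proof is correct and is essentially the paper's argument: monotonicity reduces $\P^0_j(f,\sigma)$ to $\Pr_{x\sim\sigma^n}[f(1,x_{-j})=0]$. Then each of the two terms in $\J^0_j$ changes by at most $prn$ when passing from $\U_r$ to $\A$, by \cref{lem: A is close to uniform} and \cref{claim: data processing expectation}, and you sum over $j$. The only cosmetic difference is that you view $x_{-j}\mapsto f(1,x_{-j})$ as a function on $[r]^{n-1}$ and get $pr(n-1)$, whereas the paper treats it as a function on $[r]^n$ that ignores coordinate $j$ and uses $prn$ directly, which also sidesteps the degenerate case $n-1=0$.
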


\begin{proof}[Proof of \cref{claim: Inf f uniform vs A}]
By monotonicity of $f$,
\(
\P_i^0(f,\A)
= \Pr_{x\sim\A^n}[f(1,x_{-i})=0]\), and the same equation holds under the uniform distribution as well.
Applying \cref{claim: data processing expectation} and
\cref{lem: A is close to uniform} to the indicators of
$f(1,x_{-i})=0$ and $f(x)=0$, respectively, we obtain
\[
\P_i^0(f,\A)\le \P_i^0(f)+prn
\qquad\text{and}\qquad
\Pr_{x\sim\A^n}[f(x)=0]
\ge \Pr_{x\sim\U_n}[f(x)=0]-prn.
\]
Therefore,
\begin{align*}
\J^0(f,\A)
&= \sum_i \left(
\P_i^0(f,\A)-\Pr_{x\sim\A^n}[f(x)=0]
\right)\\
&\le \sum_i \left(
\P_i^0(f)+prn
-\left(\Pr_{x\sim\U_n}[f(x)=0]-prn\right)
\right)\\
&= \J^0(f)+2pr\cdot n^2,
\end{align*}
as desired.
\end{proof}

With the above bounds, we are ready to prove \cref{lem: approx product p biased}.

\begin{proof}
[Proof of \cref{lem: approx product p biased}]
We prove each of these results one after the other.
\begin{enumerate}
\item 
Let $y_1\le y_2\in \zo^m$, and let $x_1,x_2\in[r]^n$
be their corresponding encoded values.
By definition of $\phi_a$, changing an input bit from $0$ to $1$
cannot decrease the encoded value of its block.
Thus, $x_1\le x_2$. Since $f$ is monotone, we obtain
$g(y_1)=f(x_1)\le f(x_2)=g(y_2)$, as desired.

\item
By definition of $g$, we have that $\E_{y\sim \mu_p}[g(y)] = \E_{x\sim \A^n}[f(x)]$.
Hence, it suffices to show that
\[
\abs{\E_{x\sim \A^n}[f(x)] - \E_{x\sim \U_n}[f(x)]}\le prn.
\]
Applying the data processing inequality (\cref{claim: data processing expectation}), it suffices to show that $\abs{\A^n - \U_n} \le prn$.
This last claim directly follows from \cref{lem: A is close to uniform}.

\item
This directly follows from \cref{claim: Inf f uniform vs A} and \cref{claim: Inf p g at most 2 inf f A}.

\item
Let $S\subseteq [m]$ be arbitrary such that $\P^1_S(g,\mu_p)\ge 1-\eps$.
For $i\in [n]$, let $B_i = \{(i-1) \cdot \ell+1, \dots, i \cdot \ell\}$.
We define our set $T$ as follows:
$T = \{i\in [n]: B_i \cap S \ne \emptyset\}$.
Then, $\abs{T}\le \abs{S}$.
We claim that
\[
\P^1_T(f, \A)
= \Pr_{x\sim \A^n}[f(r^T,x_{-T})=1]
\ge 1-\eps.
\]
First, let us see how our desired claim follows from this.
From \cref{lem: A is close to uniform}, we have that
$\abs{\A^n-\U_r^n}\le prn$.
Applying \cref{claim: data processing expectation} to the function
obtained by fixing the coordinates in $T$ to $r$, we have that
\[
\P^1_T(f)
= \Pr_{x\sim \U_r^n}[f(r^T,x_{-T})=1]
\ge \P^1_T(f,\A)-prn
\ge 1-\eps-prn
\]
as desired.

We now prove our remaining claim that $\P^1_T(f,\A)\ge 1-\eps$.
Let $S'=\bigcup_{i\in T} B_i$.
Setting all bits in each block indexed by $T$ to $1$ produces
encoded values at most $r$.
By monotonicity of $f$, fixing the original coordinates in $T$
to $r$ can only increase the output. Thus,
$\P^1_T(f,\A)\ge \P^1_{S'}(g,\mu_p)$.
By the definitions of $T$ and $S'$, we have $S\subseteq S'$.
Therefore,
\[
\P^1_T(f,\A)
\ge \P^1_{S'}(g,\mu_p)
\ge \P^1_S(g,\mu_p)
\ge 1-\eps.
\]
\end{enumerate}
\end{proof}

\paragraph{Acknowledgements}
M.G. would like to thank David Zuckerman for helpful conversations.
Part of the work was done while the authors were visiting the Simons Institute for Theory of Computing, UC Berkeley. We would like to thank the Simons Institute for the stimulating research environment and the hospitality.

\paragraph{AI use statement.}
The authors used ChatGPT Pro extensively for exploring ideas, and proving several technical parts.
The authors take full responsibility for the correctness of all results in the paper.

\printbibliography

\end{document}